\titleformat*{\paragraph}{\itshape}
\let\amstexbig\big
\def\newbig#1{%
  \ifx#1|%
    \expandafter\@firstoftwo
  \else
    \expandafter\@secondoftwo
  \fi
  {\big@bar}%
  {\amstexbig{#1}}%
}
\def\big@bar{\bBigg@{1.1}|}
\def\SD{\mathscr{D}}
\def\SE{\mathscr{E}}
\def\SU{\mathscr{U}}
\def\CB{\mathcal{B}}
\def\CI{\mathcal{I}}
\def\CK{\mathcal{K}}
\def\CN{\mathcal{N}}
\def\BN{\mathbb{N}}
\def\BR{\mathbb{R}}
\def\BRnn{\BR_{\geqslant 0}}
\def\BRsp{\BR_{>0}}
\def\BNz{\BN_{0}}
\def\bfA{\mathbf{A}}
\def\bfB{\mathbf{B}}
\def\bfI{\mathbf{I}}
\def\bfH{\mathbf{H}}
\def\bfM{\mathbf{M}}
\def\bfQ{\mathbf{Q}}
\def\bfR{\mathbf{R}}
\def\bfS{\mathbf{S}}
\def\bfT{\mathbf{T}}
\def\bfx{\mathbf{x}}
\def\Theb{\boldsymbol{\Theta}}
\def\Sigb{\boldsymbol{\Sigma}}
\def\Omgb{\boldsymbol{\Omega}}
\def\bdd{\boldsymbol{d}}
\def\bde{\boldsymbol{e}}
\def\bdx{\boldsymbol{x}}
\DeclareMathOperator{\diag}{diag}
\DeclareMathOperator{\vspan}{span}
\DeclareMathOperator{\trace}{trace}
\DeclareMathOperator{\card}{card}
\def\act{\mathrm{supp}}
\def\edge{\mathrm{edge}}
\def\free{\mathrm{free}}
\newcommand{\score}{\mathsf{S}}
\def\Frob{\mathrm{F}}
\def\Mat{\mathcal{M}}
\def\Sym{\mathcal{S}}
\def\SPSD{\mathcal{S}_{\succcurlyeq 0}}
\def\SPD{\mathcal{S}_{\succ 0}}
\def\Diago{\mathcal{D}}
\DeclarePairedDelimiterX\mip[1]\langle\rangle{\mipargs{#1}}%
\NewDocumentCommand{\mipargs}{>{\SplitArgument{1}{,}}m }
 {\mipargsaux#1}
\NewDocumentCommand{\mipargsaux}{ m m }%
{\ifblank{#1}{\cdot}{#1}\nonscript\,\delimsize\vert\nonscript\,\mathopen{}\ifblank{#2}{\cdot}{#2}}
\theoremstyle{plain}
\newtheorem{theorem}{Theorem}[section]
\newtheorem{definition}{Definition}[section]
\newtheorem{lemma}{Lemma}[section]
\theoremstyle{definition}
\newtheorem{remark}{Remark}[section]
\newcommand{\fin} {\hfill\hbox{$\triangleleft$}}
\newcounter{myalgorithm}
\newenvironment{myalgo}[1]{%
    \refstepcounter{myalgorithm} 
    \begin{center}
    \noindent
    \begin{minipage}{\linewidth}
    \hrule
    \vspace{0.5ex}
    \textbf{Algorithm~\arabic{myalgorithm}:} \text{#1} 
    \vspace{0.5ex}
    \hrule
    \vspace{0.5ex}
}{%
    \vspace{0.5ex}
    \hrule
    \end{minipage}
    \end{center}
}
\title{\vspace*{-1cm} Information-geometry-driven\\ graph sequential growth}
\author{Harry~T. \textsc{Bond}\footnotemark[1]\qquad  
Bertrand \textsc{Gauthier}\footnotemark[1]\qquad
Kirstin \textsc{Strokorb}\footnotemark[2]}
\date{} 
\newcommand\shorttitle{Information-geometry-driven graph sequential growth}
\newcommand\authors{H.~T. \textsc{Bond}, B. \textsc{Gauthier} and K.  \textsc{Strokorb}}
\begin{document}

\maketitle

\renewcommand{\thefootnote}{\fnsymbol{footnote}}
%
\footnotetext[1]{Cardiff University, School of Mathematics --- {BondH@cardiff.ac.uk}, {GauthierB@cardiff.ac.uk}}
\footnotetext[2]{University of Bath, Department of Mathematical Sciences --- {KS3047@bath.ac.uk}}
\renewcommand{\thefootnote}{\arabic{footnote}}

\begin{abstract}
We investigate the properties of a class of regularisation-free approaches for Gaussian graphical inference
based on the information-geometry-driven sequential growth of initially edgeless graphs.  
Relating the growth of a graph to a coordinate descent process, 
we characterise the fully-corrective descents corresponding to information-optimal growths, 
and propose numerically efficient strategies for their approximation.   
We demonstrate the ability of the proposed procedures to reliably extract sparse graphical models 
while limiting the number of false detections, 
and illustrate how activation ranks can provide insight into the informational relevance of edge sets. 
The considered approaches are tuning-parameter-free and have complexities akin to coordinate descents.  
\end{abstract}

\noindent\textbf{Keywords:}
Gaussian graphical model, 
information geometry, 
coordinate descent, 
sequential inference, 
stability selection.    

\vspace{0.25\baselineskip}
\noindent\textbf{Mathematics Subject Classification:} 
62H22,     
62B11,     
90C25.     

\tableofcontents

\section{Introduction}
\label{sec:Introduction}

Let $\bfx$ be a $d$-dimensional centred Gaussian random vector with covariance $\Sigb$ and precision $\Theb=\Sigb^{-1}$. 
 When the matrix $\Theb$ is sparse, the primary goal of graphical inference is to recover its sparsity pattern from the observation of independent realisations of $\bfx$. This task is often referred to as \emph{graph recovery} or \emph{structure learning} and is akin to learning the graph defined by the non-zero off-diagonal entries of $\Theb$; this graph describes the \emph{conditional independence structure} of $\bfx$; see e.g.\ \cite{lauritzen96, uhler2019handbook}.
Intrinsically related is the  complementary task of \emph{parameter learning}, that is, estimating the values of the non-null entries of $\Theb$, see for instance \cite{jog2015model}.
 More generally, practical motivation may stem from identifying  sparse approximations of not-necessarily-sparse precision matrices. 
There exists a rich literature on graphical model estimation and we refer the reader to \cite{drton2017structure,JankVdG2019handbook,uhler2019handbook,shutta2022gaussian} for overviews of major developments.
Established techniques include
partial-correlation testing, 
penalised likelihood maximisation, or neighbourhood selection 
\cite{drton2004model, meinshausen2006high, yuan2007model}. 

Likelihood-based approaches for Gaussian graphical inference 
relate to the minimisation of convex functions of the form 
\begin{align*}
f_{\bfS}:\bfQ\mapsto\trace(\bfS\bfQ) - \log(\det(\bfQ)), 
\quad\bfQ\in\SPD^{d}, 
\end{align*}
with $\SPD^{d}$ the convex cone of all order-$d$  symmetric positive-definite matrices, 
and where $\bfS$ is typically the sample covariance of the observations. 
We call $f_{\bfS}$  the \emph{Gaussian graphical loss} with respect to $\bfS$; see Section~\ref{sec:FrameAndNot} for details. 
Graphical inference is in this case carried out by searching for sparse approximations of the argument of the minimum of $f_{\bfS}$.  
In graphical-lasso-type approaches \cite{yuan2007model,banerjee2008model,friedman2008sparse}, this is achieved by minimising $f_{\bfS}$ in combination with a 
sparsity-inducing regularisation term.
A common approach is to minimise
\begin{align*}
\bfQ\mapsto f_{\bfS}(\bfQ)+\lambda\|\bfQ\|_{\ell^{1}}, 
\quad\bfQ\in\SPD^{d}, 
\end{align*}
with $\lambda\geqslant0$ and where $\|\bfQ\|_{\ell^{1}}=\sum_{i\neq j}|\bfQ_{i,j}|$ penalises the magnitude of the off-diagonal entries of $\bfQ$.   
Regularisation-based techniques enjoy very elegant properties and have been extensively studied in the literature; cf.~also~ \cite{Witten01012011,mazumder2012graphical,BuhlmVdG2011book,JankVdG2015EJS}.
In practice, they however require to appropriately tune the regularisation term, 
and they     can sometimes lead to relatively high ratios of false detections \cite{liu2013gaussian, laszkiewicz2021thresholded, koka2024false}; see also Section~\ref{sec:Experiments} for illustrations.

In this work, we investigate the properties of a class of regularisation-free  approaches for graphical inference based on the direct minimisation of $f_{\bfS}$ via information-geometry-driven coordinate descents with optimal diagonal initialisations. 
We leverage the analogy between edge activation and coordinate update, and characterise the descent corresponding to the likelihood-optimal growth of an initially edgeless graph. Specifically, optimal growths  correspond to \emph{fully-corrective} coordinate descents 
with \emph{best-fully-corrective-improvement} selection rule (see Section~\ref{sec:GraphGrow}). 
We then discuss numerically efficient strategies for the approximation of such descents, relying on approximate full correction
(Section~\ref{sec:CoordDesc})
and relaxed edge activation rules (Section~\ref{sec:RelaxedGrowth}).

The efficacy of the considered inference strategies is assessed on a comprehensive series of synthetic examples (Section~\ref{sec:Experiments}), thereby demonstrating the ability of 
the considered growth-based approaches to reliably extract sparse graphs  
while limiting the number of false detections. 
We also illustrate how the proposed strategies can 
assist the estimation of graphical models in practical situations. General comments and further considerations are gathered in a concluding discussion (Section~\ref{sec:Conclusion}).

From a theoretical standpoint, the presented developments bring together
classical results from probability, statistics and convex optimisation; 
to this extent, the contribution of this work is mostly
methodological, describing the theoretical foundations of the proposed approaches and demonstrating the practical interest of sequential-growth-based procedures as diagnostic tools for graphical-model estimation. For completeness, proofs of all the results stated in the main body of the paper are presented in Appendix~\ref{sec:ProofsAndCo}.

\section{Framework and notations}
\label{sec:FrameAndNot}

Throughout this note, we use the classical matrix notation 
and identify a vector $\bdx\in\BR^{d}$, $d\in\BN$,  
as the $d\times 1$ column matrix defined by the coefficients 
of $\bdx$ in the canonical basis $\{\bde_{i}\}_{i\in[d]}$ of $\BR^{d}$; 
$[d]$ stands for the set of all integers between $1$ and $d$.  
The transpose of a matrix $\bfM$ is denoted by $\bfM^{T}$,  
and $\bfI_{d}$ is the $d\times d$ identity matrix; 
$\|\bfM\|_{\Frob}$ stands for the Frobenius norm of $\bfM$, 
and the underlying inner product is denoted by $\mip[]{,}_{\Frob}$. 
 Let $\Mat^{d}=\BR^{d\times d}$ be the linear space of all $d$-order matrices; we consider the usual topology on $\Mat^{d}$.  
The linear subspaces of all diagonal and all symmetric matrices are denoted by $\Diago^{d}$ and $\Sym^{d}$, respectively.  
Further, $\SPSD^{d}$ and $\SPD^{d}$ stand for the convex cones of all symmetric positive-semidefinite and symmetric positive-definite matrices of order $d$, respectively.  
For $\CK\subseteq\SPD^{d}$, we set 
\begin{align*}
\lambda_{\min}(\CK)=\inf\{\lambda_{\min}(\bfQ) \,|\, \bfQ\in\CK\}
\quad\text{and}\quad
\lambda_{\max}(\CK)=\sup\{\lambda_{\max}(\bfQ) \,|\, \bfQ\in\CK\},
\end{align*}
where $\lambda_{\min}(\bfQ)$ and $\lambda_{\max}(\bfQ)$ are the smallest and largest eigenvalues of $\bfQ$, 
respectively. We also set $\BNz=\BN\cup\{0\}$.   

\subsection{Gaussian graphical loss}
\label{sec:GraphModInf}

Consider a $d$-dimensional centred Gaussian random vector  $\bfx\sim\CN_{\Sigb}$, with $\Sigb\in\SPD^{d}$. 
Let $(\bdx_{l})_{l\in[n]}$ be $n\in\BN$ independent realisations of $\bfx$. 
The negative log-likelihood of $\bfQ\in\SPD^{d}$ being the precision of $\bfx$ reads 
\begin{align}\label{eq:LikelihoodGraph}
-\ell(\CN_{\bfQ^{-1}}\,|\,\bdx_{1},\cdots,\bdx_{n}) = 
\frac{n}{2}\big[
\trace(\hat{\Sigb}\bfQ)
-\log(\det(\bfQ)) 
+d \log(2\pi) 
\big], 
\end{align}
with $\hat{\Sigb}=\frac{1}{n}\sum_{l=1}^{n}\bdx_{l}\bdx_{l}^{T}\in\SPSD^{d}$ the classical biased estimate of $\Sigb$. 
More generally, we call 
\begin{align*}
f_{\bfS}:\SPD^{d}\to\BR,\quad
f_{\bfS}(\bfQ) =\trace(\bfS\bfQ) - \log(\det(\bfQ))  
\end{align*}
the \emph{Gaussian graphical loss} with respect to $\bfS\in\SPSD^{d}$; see Remark~\ref{rem:AboutScoreKL} for an information geometric interpretation of the map $f_\bfS$. 

Observing that $\Sym^{d}$ is the tangent space of $\SPD^{d}$ at any given $\bfQ\in\SPD^{d}$, 
the first and second total derivatives of $f_{\bfS}$ at $\bfQ\in\SPD^{d}$ read
\begin{gather*}
D_{\bfQ} f_{\bfS} : \bfH \mapsto \trace((\bfS - \bfQ^{-1})\bfH)
=\mip[]{\bfS-\bfQ^{-1} , \bfH}_{\Frob}, \quad 
\bfH\in\Sym^{d},\mbox{ and} \\
D^2_{\bfQ} f_{\bfS} : (\bfH_{1},\bfH_{2}) 
	\mapsto  \trace(\bfQ^{-1} \bfH_{1} \bfQ^{-1} \bfH_{2}) 
	= \mip[]{\bfH_{1} \bfQ^{-1}, \bfQ^{-1} \bfH_{2}}_{\Frob}, \quad 
\bfH_{1}\text{ and }\bfH_{2}\in\Sym^{d}.   	
\end{gather*} 
In particular, $f_{\bfS}$ is continuous over $\SPD^{d}$. 
Lemma~\ref{lem:fSStrictCvx} below recalls some important properties of the map $f_{\bfS}$.  
We say that a function $f:\SPD^{d}\to\BR$ is \emph{coercive}\footnote{We use the term \emph{coercive} for convenience to encapsulate the idea of growing to $+\infty$ when reaching the boundary of the domain $\SPD^{d}$.}
if for any sequence 
$(\bfQ_{k})_{k\in\BN}\subset\SPD^{d}$ such that 
$\lambda_{\min}(\bfQ_{k})\to0$ or $\lambda_{\max}(\bfQ_{k})\to+\infty$, 
we have $f(\bfQ_{k})\to+\infty$. If $f$ is also continuous, 
then its sublevel sets are compact.

\begin{lemma}\label{lem:fSStrictCvx}
For all $\bfS\in\SPSD^{d}$,  the map $f_{\bfS}$ is strictly convex on $\SPD^{d}$. 
If $\bfS$ is invertible, then $f_{\bfS}$ is coercive over $\SPD^{d}$ 
and is minimised at $\bfQ=\bfS^{-1}$. 
If $\bfS$ is singular, then $f_{\bfS}$ is unbounded from below. 
\end{lemma}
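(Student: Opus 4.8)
The plan is to establish the three assertions in turn, using the first- and second-derivative formulas for $f_{\bfS}$ displayed just above the statement, together with the fact that $\SPD^{d}$ is an open convex subset of $\Sym^{d}$. For strict convexity I would argue from the Hessian: since $\bfQ\mapsto\trace(\bfS\bfQ)$ is affine, the second total derivative of $f_{\bfS}$ equals that of $-\log\det$, and the displayed formula gives, for every $\bfQ\in\SPD^{d}$ and $\bfH\in\Sym^{d}$,
\[
D^2_{\bfQ}f_{\bfS}(\bfH,\bfH)=\trace(\bfQ^{-1}\bfH\bfQ^{-1}\bfH)=\bigl\|\bfQ^{-1/2}\bfH\bfQ^{-1/2}\bigr\|_{\Frob}^{2},
\]
the last equality because $\bfQ^{-1/2}\bfH\bfQ^{-1/2}$ is symmetric. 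Since $\bfH\mapsto\bfQ^{-1/2}\bfH\bfQ^{-1/2}$ is a linear bijection of $\Sym^{d}$, this quadratic form is positive definite, and a twice continuously differentiable function with everywhere positive-definite Hessian on an open convex set is strictly convex.

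For the case $\bfS$ invertible (hence $\bfS\in\SPD^{d}$, so $s:=\lambda_{\min}(\bfS)>0$), I would produce a separable lower bound on $f_{\bfS}$ in terms of the eigenvalues $\mu_{1},\dots,\mu_{d}>0$ of $\bfQ$: from $\trace(\bfS\bfQ)\geq s\,\trace(\bfQ)$ and $\log\det\bfQ=\sum_{i}\log\mu_{i}$ one obtains $f_{\bfS}(\bfQ)\geq\sum_{i=1}^{d}(s\mu_{i}-\log\mu_{i})$. Each summand is a scalar function of $\mu\in(0,\infty)$ that is bounded below and tends to $+\infty$ as $\mu\to0^{+}$ and as $\mu\to+\infty$; hence along any sequence with $\lambda_{\min}(\bfQ_{k})\to0$ or $\lambda_{\max}(\bfQ_{k})\to+\infty$ one summand diverges while the others stay bounded below, giving $f_{\bfS}(\bfQ_{k})\to+\infty$ and coercivity. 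Being continuous and coercive, $f_{\bfS}$ has nonempty compact sublevel sets and thus attains a minimum, which is unique by strict convexity. Since $\SPD^{d}$ is open, the minimiser $\bfQ^{\star}$ is a stationary point, so the first-derivative formula forces $\mip[]{\bfS-(\bfQ^{\star})^{-1} , \bfH}_{\Frob}=0$ for all $\bfH\in\Sym^{d}$; taking $\bfH=\bfS-(\bfQ^{\star})^{-1}\in\Sym^{d}$ yields $\bfQ^{\star}=\bfS^{-1}\in\SPD^{d}$.

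For the singular case I would exhibit an explicit ray along which $f_{\bfS}\to-\infty$: choosing a unit vector $\bdv$ with $\bfS\bdv=0$ and setting $\bfQ_{t}=\bfI_{d}+t\,\bdv\bdv^{T}\in\SPD^{d}$ for $t>0$ (eigenvalue $1+t$ along $\bdv$, eigenvalue $1$ on $\bdv^{\perp}$), we have $\trace(\bfS\bfQ_{t})=\trace(\bfS)+t\,\bdv^{T}\bfS\bdv=\trace(\bfS)$ and $\det(\bfQ_{t})=1+t$, so $f_{\bfS}(\bfQ_{t})=\trace(\bfS)-\log(1+t)\to-\infty$ as $t\to+\infty$. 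I do not anticipate a genuine obstacle in any of this; the only step that calls for a little care is the coercivity estimate, where both a vanishing smallest eigenvalue and an exploding largest eigenvalue must be controlled simultaneously — the separable lower bound above does exactly that — and where invertibility of $\bfS$ is used precisely to ensure $s=\lambda_{\min}(\bfS)>0$.
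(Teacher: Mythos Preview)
Your proof is correct and follows essentially the same approach as the paper: the Hessian is written as a squared Frobenius norm of $\bfQ^{-1/2}\bfH\bfQ^{-1/2}$ (the paper writes $\bfA\bfH\bfA$ with $\bfA^{2}=\bfQ^{-1}$), coercivity is obtained from the lower bound $\trace(\bfS\bfQ)\geqslant\lambda_{\min}(\bfS)\trace(\bfQ)$, and unboundedness is shown via an explicit sequence escaping along the null space of $\bfS$. Your separable eigenvalue bound handles both boundary behaviours of $\lambda_{\min}$ and $\lambda_{\max}$ at once (the paper splits into the two cases $\lambda_{\max}\to+\infty$ and $\det\to0$), and your rank-one perturbation $\bfI_{d}+t\,\bdv\bdv^{T}$ is a slightly more direct construction than the paper's diagonalise-then-build approach, but these are cosmetic differences rather than genuinely distinct routes.
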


Let $\nabla f_{\bfS}(\bfQ)=\bfS - \bfQ^{-1}$ be the gradient of $f_{\bfS}$ at $\bfQ\in\SPD^{d}$ with respect to the Frobenius inner product.
Lemma~\ref{lem:StongConvexity} expresses that
the map $f_{\bfS}$ is strongly convex and has Lipschitz-continuous gradient on any non-empty compact convex subset of $\SPD^{d}$. 

\begin{lemma}\label{lem:StongConvexity}
Let $\bfS\in\SPSD^{d}$.
If $\CK\subset\SPD^{d}$ is such that
$\lambda_{\min}(\CK)>0$, then we have  
\begin{align*}
\|\nabla f_{\bfS}(\bfQ_{1})-\nabla f_{\bfS}(\bfQ_{2})\|_{\Frob}
\leqslant\|\bfQ_{1}-\bfQ_{2}\|_{\Frob}/\lambda_{\min}^{2}(\CK), 
\quad \text{$\bfQ_{1}$ and $\bfQ_{2}\in\CK$.} 
\end{align*}
If $\CK\subset\SPD^{d}$ is such that 
$\lambda_{\max}(\CK)$ is finite, then we have 
\begin{align*}
\mip[]{\nabla f_{\bfS}(\bfQ_{1})-\nabla f_{\bfS}(\bfQ_{2}), \bfQ_{1}-\bfQ_{2} }_{\Frob}
\geqslant\|\bfQ_{1}-\bfQ_{2}\|_{\Frob}^{2}/\lambda_{\max}^{2}(\CK), 
\quad \text{$\bfQ_{1}$ and $\bfQ_{2}\in\CK$.} 
\end{align*}
\end{lemma}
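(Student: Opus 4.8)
The plan is to exploit the explicit formula for the gradient, $\nabla f_{\bfS}(\bfQ)=\bfS-\bfQ^{-1}$, so that the difference of gradients reduces to the difference of inverses: $\nabla f_{\bfS}(\bfQ_1)-\nabla f_{\bfS}(\bfQ_2)=\bfQ_2^{-1}-\bfQ_1^{-1}$. Both claimed estimates are then statements about the map $\bfQ\mapsto\bfQ^{-1}$ on a convex set $\CK$ with controlled spectrum, and the natural device is the fundamental theorem of calculus along the segment $\bfQ_t=(1-t)\bfQ_2+t\bfQ_1$, $t\in[0,1]$, which stays in $\CK$ by convexity. Writing $\bfH=\bfQ_1-\bfQ_2$, one has $\frac{\dd}{\dd t}\bfQ_t^{-1}=-\bfQ_t^{-1}\bfH\bfQ_t^{-1}$, hence
\begin{align*}
\nabla f_{\bfS}(\bfQ_1)-\nabla f_{\bfS}(\bfQ_2)=\int_0^1 \bfQ_t^{-1}\bfH\bfQ_t^{-1}\,\dd t .
\end{align*}
Equivalently, in terms of the second derivative recorded in the excerpt, $D^2_{\bfQ_t}f_{\bfS}(\bfH,\bfH)=\|\bfQ_t^{-1}\bfH\|_{\Frob}^2=\mip[]{\bfH,\bfQ_t^{-1}\bfH\bfQ_t^{-1}}_{\Frob}$, which is the bilinear object both inequalities are really about.

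For the first (Lipschitz) bound, I would take Frobenius norms through the integral and use the submultiplicativity of the operator norm against the Frobenius norm: $\|\bfQ_t^{-1}\bfH\bfQ_t^{-1}\|_{\Frob}\leqslant\|\bfQ_t^{-1}\|_{\sp}^2\,\|\bfH\|_{\Frob}$, where $\|\cdot\|_{\sp}$ is the spectral norm. Since $\bfQ_t\succcurlyeq\lambda_{\min}(\CK)\bfI_d$ we get $\|\bfQ_t^{-1}\|_{\sp}=1/\lambda_{\min}(\bfQ_t)\leqslant 1/\lambda_{\min}(\CK)$, and integrating the constant bound over $t\in[0,1]$ yields $\|\nabla f_{\bfS}(\bfQ_1)-\nabla f_{\bfS}(\bfQ_2)\|_{\Frob}\leqslant\|\bfH\|_{\Frob}/\lambda_{\min}^2(\CK)$, as required. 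For the second (strong-monotonicity) bound, I would pair the integral identity with $\bfH=\bfQ_1-\bfQ_2$ in the Frobenius inner product:
\begin{align*}
\mip[]{\nabla f_{\bfS}(\bfQ_1)-\nabla f_{\bfS}(\bfQ_2),\bfQ_1-\bfQ_2}_{\Frob}
=\int_0^1 \mip[]{\bfQ_t^{-1}\bfH\bfQ_t^{-1},\bfH}_{\Frob}\,\dd t
=\int_0^1 \|\bfQ_t^{-1}\bfH\|_{\Frob}^2\,\dd t .
\end{align*}
Here I would lower-bound the integrand using $\bfQ_t^{-1}\succcurlyeq\lambda_{\max}(\bfQ_t)^{-1}\bfI_d\succcurlyeq\lambda_{\max}(\CK)^{-1}\bfI_d$, which gives $\|\bfQ_t^{-1}\bfH\|_{\Frob}^2=\trace(\bfH\bfQ_t^{-1}\bfH\bfQ_t^{-1})\geqslant\lambda_{\max}(\CK)^{-2}\trace(\bfH^2)=\|\bfH\|_{\Frob}^2/\lambda_{\max}^2(\CK)$; integrating over $[0,1]$ closes the argument.

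The only genuine subtlety — and the step I expect to need the most care — is the matrix inequality $\trace(\bfH\,\bfA\,\bfH\,\bfA)\geqslant c^{2}\,\trace(\bfH^2)$ when $\bfA\succcurlyeq c\bfI_d$ and $\bfH$ is symmetric; one cannot simply "substitute" $\bfA\succcurlyeq c\bfI_d$ inside the product because $\bfH\bfA\bfH$ need not be monotone in $\bfA$ in an obvious way. The clean way around this is to write $\bfA=c\bfI_d+\bfB$ with $\bfB\succcurlyeq0$, expand $\trace(\bfH\bfA\bfH\bfA)=c^2\trace(\bfH^2)+2c\,\trace(\bfH\bfB\bfH)+\trace(\bfH\bfB\bfH\bfB)$, and note $\trace(\bfH\bfB\bfH)=\trace(\bfB^{1/2}\bfH^2\bfB^{1/2})\geqslant0$ and $\trace(\bfH\bfB\bfH\bfB)=\|\bfB^{1/2}\bfH\bfB^{1/2}\|_{\Frob}^2\geqslant0$; alternatively, diagonalise $\bfH=\bfU\bfLambda\bfU^{T}$ and write $\|\bfA\bfH\|_{\Frob}^2=\sum_{i,j}\tilde\bfA_{ij}^2\lambda_i^2$ with $\tilde\bfA=\bfU^{T}\bfA\bfU\succcurlyeq c\bfI_d$, so the diagonal entries $\tilde\bfA_{ii}\geqslant c$ already force $\|\bfA\bfH\|_{\Frob}^2\geqslant c^2\sum_i\lambda_i^2$. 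Everything else is bookkeeping with the fundamental theorem of calculus, and the convexity of $\CK$ is exactly what makes the segment $\bfQ_t$ admissible.
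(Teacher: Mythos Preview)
Your overall strategy is sound and leads to the stated inequalities, but it differs from the paper's argument and contains a small slip worth flagging.

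The paper bypasses integration entirely by using the algebraic identity
$\nabla f_{\bfS}(\bfQ_{1})-\nabla f_{\bfS}(\bfQ_{2})=\bfQ_{2}^{-1}-\bfQ_{1}^{-1}=\bfQ_{1}^{-1}(\bfQ_{1}-\bfQ_{2})\bfQ_{2}^{-1}$
and then applying the trace inequality $\lambda_{\min}(\bfA)\trace(\bfM)\leqslant\trace(\bfA\bfM)\leqslant\lambda_{\max}(\bfA)\trace(\bfM)$ (valid for $\bfA\in\Sym^{d}$ and $\bfM\in\SPSD^{d}$) twice. This is shorter and, notably, never needs the segment between $\bfQ_{1}$ and $\bfQ_{2}$: the lemma as stated does \emph{not} assume $\CK$ is convex, whereas you do. Your integral argument is easily repaired on this point---one simply observes that $\bfQ_{t}\succcurlyeq\lambda_{\min}(\CK)\bfI_{d}$ and $\bfQ_{t}\preccurlyeq\lambda_{\max}(\CK)\bfI_{d}$ for all $t\in[0,1]$ because the endpoints satisfy these Loewner bounds, regardless of whether $\bfQ_{t}\in\CK$---but the paper's route sidesteps the issue entirely.

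There is also a genuine slip in your second display: the equality $\mip[]{\bfQ_{t}^{-1}\bfH\bfQ_{t}^{-1},\bfH}_{\Frob}=\|\bfQ_{t}^{-1}\bfH\|_{\Frob}^{2}$ is false in general, since the left side equals $\trace(\bfQ_{t}^{-1}\bfH\bfQ_{t}^{-1}\bfH)=\|\bfQ_{t}^{-1/2}\bfH\bfQ_{t}^{-1/2}\|_{\Frob}^{2}$ while the right side equals $\trace(\bfQ_{t}^{-2}\bfH^{2})$. The same conflation reappears when you write $\|\bfQ_{t}^{-1}\bfH\|_{\Frob}^{2}=\trace(\bfH\bfQ_{t}^{-1}\bfH\bfQ_{t}^{-1})$. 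Fortunately your expansion argument in the ``subtlety'' paragraph bounds the \emph{correct} quantity $\trace(\bfH\bfA\bfH\bfA)$ from below by $c^{2}\trace(\bfH^{2})$, so the conclusion survives; only the diagonalisation alternative you sketch is tailored to the wrong object $\|\bfA\bfH\|_{\Frob}^{2}$. Replace $\|\bfQ_{t}^{-1}\bfH\|_{\Frob}^{2}$ throughout by $\trace(\bfQ_{t}^{-1}\bfH\bfQ_{t}^{-1}\bfH)$ and the argument is clean.
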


In view of Lemma~\ref{lem:fSStrictCvx}, 
the negative log-likelihood \eqref{eq:LikelihoodGraph} admits a minimum over $\SPD^{d}$
solely when $\hat{\Sigb}$ is invertible. 
In situations where $\hat{\Sigb}$ is singular (for $n\leqslant d$ for instance), 
it is common to introduce a \emph{ridge penalty} and consider the graphical loss defined by
$\bfS=\hat{\Sigb}+\bfT$, with $\bfT\in\SPD^{d}$.  
A typical instance is to use $\bfT=\gamma^{2}\bfI_{d}$, with $\gamma>0$ (Gaussian observation noise); 
see for instance \cite{ha2014partial, van2020updating}. 
A ridge penalty can also help to improve numerical stability or to account for account measurement errors.

\begin{remark}\label{rem:AboutScoreKL}
The loss function $f_{\bfS}$ is closely related to the 
information geometry of multivariate Gaussian distributions, which dates back to the works of
\cite{kullback97reprint, csiszar75idiv, speedkiiveri86}.
Following for instance \cite{dawid07geometryscoring},
the Gaussian graphical loss 
can be recovered from the \emph{scoring rule} 
\begin{align*}
\score(\bdx, &\CN_{\bfQ^{-1}}) = \bdx^{T}\bfQ\bdx - \log(\det(\bfQ)), \quad 
\bdx\in\BR^{d}\text{ and }\bfQ\in\SPD^{d}.
\end{align*}
 When $\bfS$ is invertible, we have
\begin{align*}
 f_{\bfS}(\bfQ)-f_{\bfS}(\bfS^{-1})
 =\trace(\bfS\bfQ -\bfI_{d}) - \log(\det(\bfS\bfQ))
 =2 D_{\mathrm{KL}}(\CN_{\bfS} \| \CN_{\bfQ^{-1}}),
 \end{align*}
 with $D_{\mathrm{KL}}(\CN_{\bfS} \| \CN_{\bfQ^{-1}})$ the \emph{Kullback-Leibler divergence} from $\CN_{\bfQ^{-1}}$ to $\CN_{\bfS}$.  
 Here, the value 
$f_{\bfS}(\bfS^{-1})=d+\log(\det(\bfS))$ 
is the \emph{generalised entropy} or \emph{information measure} of the distribution $\CN_{\bfS}$.
\fin
\end{remark}

\subsection{Coordinate directions and graph likelihood}
\label{sec:CoordDir}

For $i$ and $j\in[d]$, $i<j$, let 
\begin{align*}
\bfB{(i,i)}=\bde_{i}\bde_{i}^{T} 
\quad\text{and}\quad
\bfB{(i,j)}  = [\bde_{i}\bde_{j}^{T} + \bde_{j}\bde_{i}^{T} ]/\sqrt{2},
\end{align*}
so that the collection $\CB=\{\bfB{(i,j)} \,|\, 1 \leqslant i < j \leqslant d\}$ 
forms an orthonormal basis of $\Sym^{d}$ for the Frobenius inner product.
When minimising $f_{\bfS}$, 
a line search along $\bfB{(i,i)}$ updates the $i$-th diagonal entry of the current iterate (\emph{diagonal update}), 
while a line search along $\bfB{(i,j)}$ updates the $(i,j)$-entry and its symmetric counterpart (\emph{off-diagonal update}).  
We denote by 
\begin{align*}
\SD=\{(i,i)\}_{i\in[d]}
\quad\text{and}\quad \SU=\{(i,j)\}_{1\leqslant i<j\leqslant d} 
\end{align*}
the sets of all diagonal and upper-diagonal indices of a $d\times d$ matrix, respectively.

We use the following terminology to make 
the connections between graph sequential growth strategies and associated coordinate descents precise.
For $\bfQ\in\SPD^{d}$, the \emph{support} (or \emph{activation}) $\act(\bfQ)$ of $\bfQ$ is given by
\begin{align*}
\act(\bfQ)=\big\{(i,j)\in\SD\cup\SU \, \big| \, \bfQ_{i,j}\neq0\big\}.  
\end{align*}
Observe that 
we always have $\SD\subseteq\act(\bfQ)$.
The set of upper-diagonal indices
We refer to
\[\edge(\bfQ)=\act(\bfQ)\cap\SU
\quad \text{and} \quad 
\free(\bfQ)=\SU\backslash\edge(\bfQ),
\] 
as the \emph{edge set defined by $\bfQ$} and the \emph{free set of $\bfQ$}.  Naturally, $\edge(\bfQ)$ defines the edge set of a simple $d$-vertex undirected graph. We say that \emph{a matrix $\bfQ\in\SPD^{d}$ is supported by a graph $G=([d],\SE)$} if $\edge(\bfQ)\subseteq\SE$, where $\SE\subseteq\SU$ is the edge set of $G$. 

For $\bfS\in\SPSD^{d}$ and $G=([d],\SE)$, $\SE\subseteq\SU$, 
the \emph{Gaussian graphical loss $\widetilde{f}_{\bfS}(G)$ of the graph $G$ with respect to $\bfS$} is 
the infimum of $f_{\bfS}$ over the set of all symmetric positive definite matrices supported by $G$, that is,    
\begin{align}
\begin{split}\label{def:LikeliGraph}
\widetilde{f}_{\bfS}(G) 
& =\inf\big\{f_{\bfS}(\bfQ) \, \big| \,   \bfQ \in \SPD^{d}, \, \edge(\bfQ)\subseteq\SE \big\} \\
& = \inf\big\{f_{\bfS}(\bfQ) \, \big| \, \bfQ\in\vspan\{\bfB(i,j)|(i,j)\in\SD\cup\SE\}\cap\SPD^{d}\big\}.  
\end{split}
\end{align}
For $\bfS\in\SPSD^{d}$ such that $\diag(\bfS)\in\BRsp^{d}$, 
the graphical loss of the  
edgeless graph
${G_{\emptyset}=([d], \emptyset)}$
is realised by
the diagonal matrix $\bfQ_{\bfS,\emptyset}\in\Diago^{d}\cap\SPD^{d}$, with 
\begin{align} 
\label{eq:opt-diag}
[\bfQ_{\bfS,\emptyset}]_{i,i}=1/\bfS_{i,i}, \quad i\in[d].\end{align}
Indeed, we have $[\nabla f_{\bfS}( \bfQ_{\bfS,\emptyset})]_{i,i}=0$, $i\in [d]$, 
and so $\widetilde{f}_{\bfS}(G_{\emptyset})=f_{\bfS}(\bfQ_{\bfS,\emptyset})$.  
More generally, when $\bfS$ is invertible, the infimum appearing in \eqref{def:LikeliGraph} is always a minimum.

\begin{lemma}
\label{lem:GraphLikeliExist}
Let $\bfS\in\SPD^{d}$. For $G=([d],\SE)$, $\SE\subseteq\SU$, 
there exists a unique matrix $\bfQ_{\bfS,\SE}\in\SPD^{d}$ such that 
$\edge(\bfQ)\subseteq\SE$ and 
$\widetilde{f}_{\bfS}(G)=f_{\bfS}(\bfQ_{\bfS,\SE})$. 
We have $[\nabla f_{\bfS}(\bfQ_{\bfS,\SE})]_{i,j}=0$ for all $(i,j)\in\SD\cup\SE$.  
\end{lemma}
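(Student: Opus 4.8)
The plan is to recast \eqref{def:LikeliGraph} as the minimisation of $f_{\bfS}$ over the set $\CK_{\SE}=V_{\SE}\cap\SPD^{d}$, where $V_{\SE}=\vspan\{\bfB(i,j)\,|\,(i,j)\in\SD\cup\SE\}$ is a linear subspace of $\Sym^{d}$; by the second expression in \eqref{def:LikeliGraph} we have $\widetilde{f}_{\bfS}(G)=\inf\{f_{\bfS}(\bfQ)\,|\,\bfQ\in\CK_{\SE}\}$. Note that $\CK_{\SE}$ is nonempty, since it contains $\bfI_{d}=\sum_{i\in[d]}\bfB(i,i)\in V_{\SE}$ (because $\SD\subseteq\SD\cup\SE$), and that, being the intersection of the open set $\SPD^{d}$ with the subspace $V_{\SE}$, it is convex and relatively open in $V_{\SE}$.

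For existence, I would first invoke Lemma~\ref{lem:fSStrictCvx}: since $\bfS\in\SPD^{d}$ is invertible, $f_{\bfS}$ is continuous, bounded below (it attains its global minimum at $\bfS^{-1}$) and coercive on $\SPD^{d}$, so its sublevel sets in $\SPD^{d}$ are compact, hence closed in $\Sym^{d}$. Fixing any $\bfQ_{0}\in\CK_{\SE}$ and setting $c=f_{\bfS}(\bfQ_{0})$, the set $L=\{\bfQ\in\SPD^{d}\,|\,f_{\bfS}(\bfQ)\leqslant c\}\cap V_{\SE}$ is the intersection of a compact set with the closed subspace $V_{\SE}$, hence compact, and it is nonempty as it contains $\bfQ_{0}$. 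The continuous function $f_{\bfS}$ therefore attains its minimum over $L$; since $\inf_{\CK_{\SE}}f_{\bfS}=\inf_{L}f_{\bfS}=\widetilde{f}_{\bfS}(G)$, this minimum is attained at some $\bfQ_{\bfS,\SE}\in\CK_{\SE}$. Uniqueness follows from the strict convexity of $f_{\bfS}$ on the convex set $\CK_{\SE}$ (Lemma~\ref{lem:fSStrictCvx}): two distinct minimisers would have their midpoint in $\CK_{\SE}$ with strictly smaller value, a contradiction.

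It then remains to record the first-order condition. Because $\SPD^{d}$ is open in $\Sym^{d}$, the minimiser $\bfQ_{\bfS,\SE}$ lies in the relative interior of $\CK_{\SE}$ within $V_{\SE}$; as it minimises the differentiable map $f_{\bfS}$ over $\CK_{\SE}$, the directional derivative $D_{\bfQ_{\bfS,\SE}}f_{\bfS}(\bfH)=\mip[]{\nabla f_{\bfS}(\bfQ_{\bfS,\SE}),\bfH}_{\Frob}$ must vanish for every $\bfH\in V_{\SE}$. Taking $\bfH=\bfB(i,j)$ for $(i,j)\in\SD\cup\SE$ and using that $\nabla f_{\bfS}(\bfQ_{\bfS,\SE})=\bfS-\bfQ_{\bfS,\SE}^{-1}$ is symmetric yields $[\nabla f_{\bfS}(\bfQ_{\bfS,\SE})]_{i,j}=0$ for all such $(i,j)$. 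Finally, $\bfQ_{\bfS,\SE}\in V_{\SE}$ forces $\edge(\bfQ_{\bfS,\SE})=\act(\bfQ_{\bfS,\SE})\cap\SU\subseteq\SE$. No step is genuinely delicate here; the only point requiring a little care is the existence argument, where one must pass from the ambient coercivity of $f_{\bfS}$ on $\SPD^{d}$ to compactness of sublevel sets inside the subspace $V_{\SE}$ — which is immediate once one observes that $V_{\SE}$ is closed.
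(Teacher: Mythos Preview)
Your proof is correct and follows essentially the same approach as the paper: restrict to a sublevel set of $f_{\bfS}$ intersected with the linear subspace $V_{\SE}$, use coercivity (Lemma~\ref{lem:fSStrictCvx}) to get compactness and hence existence, strict convexity for uniqueness, and relative openness of $\CK_{\SE}$ in $V_{\SE}$ for the first-order condition. The only cosmetic difference is that the paper uses $\bfQ_{\bfS,\emptyset}$ as the reference feasible point rather than $\bfI_{d}$.
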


For a general graph $G=([d],\SE)$, an analytical expression for the minimising matrix $\bfQ_{\bfS,\SE}$ does not exist.    
The properties of such matrices, and algorithms for their approximation, 
are for instance presented in \cite{uhler2019handbook}, an early key reference on relevant matrix completions being \cite{speedkiiveri86};   see also Section~\ref{sec:CoordDesc}. 
We refer to $\bfQ_{\bfS,\SE}$ as the \emph{graph-optimal matrix for $\bfS$ and $G$}.

\section{Information-optimal graph sequential growth}
\label{sec:GraphGrow}

We use the term \emph{graph sequential growth} for the process of sequentially adding edges to an initially edgeless graph, 
the vertex set $[d]$ staying fixed.   
Since a sequential growth corresponds to the definition of a \emph{total order} on the set of upper-diagonal indices $\SU$, 
setting $M=\card(\SU)=d(d-1)/2$, 
we can equivalently identify a graph sequential growth with  the specification of 
a finite sequence $\{(i,j)^{(k)}\}_{k=1}^{M}$ of distinct edges. 
We denote the underlying graphs as
\begin{align*} 
G^{(0)}=([d],\emptyset), 
\quad\text{and}\quad
G^{(k)}=([d],\SE^{(k)}), \quad 
\text{with} \quad  \SE^{(k)}=\{(i,j)^{(l)}\}_{l=1}^{k}, k\in[M].    
\end{align*}
Observe that $\card(\SE^{(k)})=k$ and $\SE^{(M)}=\SU$. 
Graph-sequential growths corresponds to forward-search methods, see for instance \cite{banerjee2008model}.
We refer to the operation of adding an edge to a current graph as \emph{edge activation}.

\begin{definition}
A sequence of distinct edges $\{(i,j)^{(k)}\}_{k=1}^{M}$ defines
an information-optimal graph sequential growth for $\bfS\in\SPD^{d}$ 
if for all $k\in[M]$, we have
\begin{align*}
\widetilde{f}_{\bfS}(G^{(k)}) = \min\big\{\widetilde{f}_{\bfS}(G) \, \big| \,  G=([d], \SE^{(k-1)}\cup\{(i,j)\}), \,
 (i,j)\in\SU\backslash\SE^{(k-1)} \big\}. 
\end{align*}
\end{definition}

If $\{(i,j)^{(k)}\}_{k=1}^{M}$ defines
an information-optimal graph sequential growth for $\bfS\in\SPD^{d}$, 
then the underlying sequence of graph-optimal matrices $\{\bfQ_{\bfS,\SE^{(k)}}\}_{k=0}^{M}$ corresponds to the iterates 
of a coordinate descent for the minimisation of $f_{\bfS}$, 
with \emph{optimal diagonal initialisation}, 
\emph{fully-corrective update rule}, and
\emph{best-fully-corrective-improvement (BFCI) selection rule}; as detailed  in Algorithm~\ref{algo:SeqGrowth} and the subsequent discussion. 
 
For $\bfQ\in\SPD^{d}$ and $(i,j)\in\SU$, define the \emph{fully-corrective improvement score}
\begin{align*}
\CI_{\bfS}^{\mathrm{FC}}\big(\bfQ, (i,j)\big)
=f_{\bfS}(\bfQ) - \widetilde{f}_{\bfS}(G), 
\quad \text{with} \quad
G=\big([d],\edge(\bfQ)\cup\{(i,j)\}\big).  
\end{align*}
Note that 
for $G=([d],\SE)$ and $(i,j)\in\SE$, we have 
$\CI_{\bfS}^{\mathrm{FC}}\big(\bfQ_{\bfS,\SE}, (i,j)\big)=0$.

\begin{myalgo}{Information-optimal graph sequential growth. }
\label{algo:SeqGrowth}
\textbf{given} $\bfS\in\SPD^{d}$. \\ 
Set $\bfQ^{(0)}=\bfQ_{\bfS,\emptyset}$ and $\SE^{(0)}=\emptyset$ (initialisation, see \eqref{eq:opt-diag} above).  \\
\textbf{for $k\in[M]$ do} 
\begin{enumerate}[nosep,leftmargin=2em]
\item Compute $(i,j)^{(k)}\in\arg\max\big\{ \CI_{\bfS}^{\mathrm{FC}}\big(\bfQ^{(k-1)}, (i,j)\big) \, \big| \, (i,j)\in\free(\bfQ^{(k-1)}) \big\}$. 
\item Set $\SE^{(k)}=\SE^{(k-1)}\cup\{(i,j)^{(k)}\}$ and $\bfQ^{(k)}=\bfQ_{\bfS,\SE^{(k)}}$. 
\end{enumerate}
\textbf{return} $\{(i,j)^{(k)}\}_{k\in[M]}$. 
\end{myalgo}

The interpretation of Algorithm~\ref{algo:SeqGrowth} as a coordinate descent with fully-corrective update 
follows from observing that for $k\in[M]$, we have
\begin{align*}
f_{\bfS}(\bfQ^{(k)})=\min\Big\{f_{\bfS}(\bfQ) \, \Big| \,
	\bfQ\in\vspan\big\{\bfB(i,j) \, \big| \, (i,j)\in\act(\bfQ^{(k-1)})\cup\{(i,j)^{(k)}\}\big\}\Big\}.   
\end{align*}
In addition to $\bfB(i,j)^{(k)}$ and the diagonal coordinate directions $\bfB(i,i)$, $i\in[d]$, 
for $k\geqslant 2$, the computation of the iterate $\bfQ^{(k)}$
therefore also involves all the previously activated off-diagonal directions $\bfB(i,j)^{(l)}$, $l\in[k-1]$. 
The selection rule in Algorithm~\ref{algo:SeqGrowth} 
amounts to selecting the edge leading to the 
best fully-corrective improvement, that is, 
the maximum improvement after full correction.

As mentioned in Section~\ref{sec:CoordDir}, for a general graph $G=([d],\SE)$, 
it is generally not possible to obtain precisely the graph-optimal matrix $\bfQ_{\bfS,\SE}$ 
(let alone in exact arithmetic), 
and one needs instead to rely on iterative approximation procedures; see Section~\ref{sec:CoordDesc}. 
Moreover, for a current iterate $\bfQ^{(k-1)}$, 
the characterisation of the optimal edge $(i,j)^{(k)}$ involves the computation 
of $M-k+1$ such matrices. 
From a numerical standpoint, the 
execution
of Algorithm~\ref{algo:SeqGrowth} 
is thus computationally prohibitive,
especially when the dimension $d$ is large; this limitation  is generally inherent to forward-backward searches
\cite{uhler2019handbook,banerjee2008model}. 
A natural alternative 
is to consider
coordinate descents with relaxed selection and update rules
as discussed in Section~\ref{sec:RelaxedGrowth}. 
In practice, one might not necessarily compute a sequential growth in full, 
and instead stop the growth procedure once a maximum number of edges $k_{\max}$ is reached (partial growth).

\section{Coordinate descent and graph optimality}
\label{sec:CoordDesc}

In this section, we discuss the minimisation of $f_{\bfS}$, $\bfS\in\SPD^{d}$, over the convex cone of all symmetric positive-definite matrices supported by a given graph, that is,  the approximation of matrices of the form $\bfQ_{\bfS,\SE}$, $\SE\subseteq\SU$.
More specifically, we consider exact group-coordinate descents that consist of either updating a diagonal entry of the current iterate, or one of its $2\times2$ principal submatrices; we refer to such updates as \emph{$\{1,2\}$-updates} (see Section~\ref{sec:BlockUpdate}).

\subsection{Exact block update}
\label{sec:BlockUpdate}

The Gaussian loss $f_{\bfS}$ can be minimised via exact coordinate descent, 
that is, by performing exact line searches along directions in the set of coordinate directions $\CB$ (see Section~\ref{sec:CoordDir}). 
It is nevertheless also possible to characterise exact group-coordinate updates 
by directly updating a principal submatrix of the current iterate, 
as described in Lemma~\ref{lem:BlockUpdate} 
(see for instance \cite[Lemma 2]{speedkiiveri86}, and also \cite{hojsLaur24, lauritzen96}).  

\begin{lemma}[exact order-$m$ update]\label{lem:BlockUpdate} 
Consider $\bfQ \in \SPD^{d}$ and 
let $I\subseteq[d]$ be a subset of size $m$; 
set $\bfR=\bfQ^{-1}$.  
The minimum of $f_{\bfS}$ over 
$(\bfQ + \vspan\{\bfB(i,j) \,|\, \text{$i, j\in I$, $i\leqslant j$}\})\cap \SPD^{d}$ is reached at 
$\tilde{\bfQ}\in\SPD^{d}$, with
\begin{align*}
\tilde{\bfQ}_{I,I} = \bfQ_{I,I} + (\bfS_{I,I})^{-1} - (\bfR_{I,I})^{-1}
\quad\text{and}\quad\text{$\tilde{\bfQ}_{i,j} = \bfQ_{i,j}$ \quad for $i$ or $j\not\in I$. } 
\end{align*}
Setting $\tilde{\bfR}=\tilde{\bfQ}\vphantom{\bfQ}^{-1}$, we then have 
\begin{align*}
\begin{pmatrix}
   \tilde{\bfR}_{I,I} &    \tilde{\bfR}_{I,I^{c}} \\
    \tilde{\bfR}_{I^{c},I} &    \tilde{\bfR}_{I^{c},I^{c}} 
\end{pmatrix} =     
\begin{pmatrix}
    \bfS_{I,I} &    \bfS_{I,I}  (\bfR_{I,I})^{-1} \bfR_{I,I^{c}} \\ 
    \bfR_{I^{c},I}  (\bfR_{I,I})^{-1} \bfS_{I,I}  
                & \bfR_{I^{c},I^{c}} - 
                    \bfR_{I^{c},I}(\bfR_{I,I})^{-1}[\bfI_{m} - \bfS_{I,I} (\bfR_{I,I})^{-1}] \bfR_{I,I^{c}}. 
\end{pmatrix}, 
\end{align*}
and  
$f_{\bfS}(\bfQ) - f_{\bfS}(\tilde{\bfQ}) 
=\trace(\bfS_{I,I} (\bfR_{I,I})^{-1}) - m - \log(\det(\bfS_{I,I}(\bfR_{I,I})^{-1}))$. 
\end{lemma}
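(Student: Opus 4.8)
The plan is to reduce the order-$m$ block update to a self-contained optimisation over the $I\times I$ block, exploiting the structure of $f_{\bfS}$ under the constraint that all entries outside the $I\times I$ block are frozen. Write $\bfQ = \bfQ(\bfW)$ for the matrix obtained from $\bfQ$ by replacing the block $\bfQ_{I,I}$ with a symmetric matrix $\bfW$ while keeping $\bfQ_{i,j}$ for $i$ or $j\notin I$; the feasible set is then $\{\bfW \mid \bfQ(\bfW)\in\SPD^d\}$. Since $\Sym^d$ is the tangent space of $\SPD^d$ and $f_{\bfS}$ is strictly convex and coercive (Lemma~\ref{lem:fSStrictCvx}), the minimiser $\tilde\bfW$ exists, is unique, and is characterised by the first-order condition $[\nabla f_{\bfS}(\bfQ(\tilde\bfW))]_{i,j}=0$ for all $i,j\in I$, i.e.\ $[\tilde\bfR]_{I,I} = \bfS_{I,I}$ where $\tilde\bfR = \bfQ(\tilde\bfW)^{-1}$.

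The key computational device is the block-inverse (Schur complement) formula. First I would partition $\bfQ$ and $\bfR=\bfQ^{-1}$ conformally with $I, I^c$ and observe that changing only $\bfQ_{I,I}\to\tilde\bfW$ leaves $\bfQ_{I,I^c}$ and $\bfQ_{I^c,I^c}$ untouched; hence by the inversion formula the Schur complement $\bfQ_{I^c,I^c} - \bfQ_{I^c,I}\bfW^{-1}\bfQ_{I,I^c}$ governs $\tilde\bfR_{I^c,I^c}^{-1}$ only through $\bfW$. It is cleaner to instead run the argument the other way: use the identities relating $\bfR_{I,I}$, $\bfR_{I,I^c}$, $\bfR_{I^c,I^c}$ to $\bfQ$'s blocks, and note that the quantities $\bfR_{I,I^c}(\bfR_{I,I})^{-1}$ and $\bfR_{I^c,I^c} - \bfR_{I^c,I}(\bfR_{I,I})^{-1}\bfR_{I,I^c}$ are invariants of the update (they depend only on the frozen blocks, not on $\bfW$). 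Imposing $\tilde\bfR_{I,I}=\bfS_{I,I}$ and propagating these invariants through the inversion formula then yields the stated expressions for $\tilde\bfR_{I,I^c}$, $\tilde\bfR_{I^c,I}$, and $\tilde\bfR_{I^c,I^c}$. The formula $\tilde\bfQ_{I,I} = \bfQ_{I,I} + (\bfS_{I,I})^{-1} - (\bfR_{I,I})^{-1}$ follows by inverting the block-$2\times2$ structure of $\tilde\bfR$ to recover $\tilde\bfQ_{I,I}$, again using that the off-block entries are unchanged. For the value decrease, substitute $\tilde\bfR_{I,I}=\bfS_{I,I}$ into $f_{\bfS}(\bfQ)-f_{\bfS}(\tilde\bfQ) = \trace(\bfS(\bfQ-\tilde\bfQ)) - \log\det\bfQ + \log\det\tilde\bfQ$; the trace term collapses to $\trace(\bfS_{I,I}(\bfQ_{I,I}-\tilde\bfQ_{I,I})) = \trace(\bfS_{I,I}((\bfR_{I,I})^{-1}-(\bfS_{I,I})^{-1})) = \trace(\bfS_{I,I}(\bfR_{I,I})^{-1}) - m$, and the log-det ratio reduces to $\log\det(\bfS_{I,I}(\bfR_{I,I})^{-1})$ by a Schur-complement determinant identity (the Schur complement of the $I,I$ block is identical for $\bfQ$ and $\tilde\bfQ$).

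The main obstacle is purely bookkeeping: carefully tracking which Schur-complement quantities are invariant under the block update and assembling the inverse of $\tilde\bfQ$ from partial information, since one never writes $\tilde\bfQ_{I,I}$ explicitly before the end. I would structure this by first proving the invariance of $\bfR_{I,I^c}(\bfR_{I,I})^{-1}$ and of the Schur complement $\bfR_{I^c,I^c} - \bfR_{I^c,I}(\bfR_{I,I})^{-1}\bfR_{I,I^c}$ (a one-line consequence of the block-inverse formula applied to $\bfR = \bfQ^{-1}$), then deriving everything else from the single new condition $\tilde\bfR_{I,I}=\bfS_{I,I}$. The determinant identity needed at the end, $\det\bfQ = \det(\bfR_{I,I})^{-1}\cdot\det(\text{Schur complement in }\bfR)$ rephrased suitably, is standard and can be cited (e.g.\ \cite{speedkiiveri86}); strict convexity and coercivity from Lemma~\ref{lem:fSStrictCvx} guarantee that the first-order critical point found is the unique global minimiser over the feasible slice, which is what licenses reading off $\tilde\bfQ$ from the first-order conditions alone.
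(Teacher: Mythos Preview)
Your approach is correct but proceeds in the reverse direction from the paper. The paper's proof is pure verification: it takes the stated formulas for $\tilde{\bfQ}$ and $\tilde{\bfR}$ as ans\"atze, checks by direct block multiplication that $\tilde{\bfR}\tilde{\bfQ}=\bfI_{d}$, confirms $\tilde{\bfQ}\succ 0$ via the Schur-complement criterion (noting that $\tilde{\bfQ}_{I,I}-\tilde{\bfQ}_{I,I^{c}}(\tilde{\bfQ}_{I^{c},I^{c}})^{-1}\tilde{\bfQ}_{I^{c},I}=(\bfS_{I,I})^{-1}\succ 0$), and then reads off optimality from $[\nabla f_{\bfS}(\tilde{\bfQ})]_{I,I}=\bfS_{I,I}-\tilde{\bfR}_{I,I}=0$. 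You instead \emph{derive} the formulas: argue existence and uniqueness first via strict convexity and coercivity, impose the first-order condition $\tilde{\bfR}_{I,I}=\bfS_{I,I}$, and propagate it through the two Schur-complement invariants $(\bfR_{I,I})^{-1}\bfR_{I,I^{c}}=-\bfQ_{I,I^{c}}(\bfQ_{I^{c},I^{c}})^{-1}$ and $\bfR_{I^{c},I^{c}}-\bfR_{I^{c},I}(\bfR_{I,I})^{-1}\bfR_{I,I^{c}}=(\bfQ_{I^{c},I^{c}})^{-1}$ that depend only on the frozen blocks. Your route explains where the formulas come from and makes the log-det cancellation transparent; the paper's route is shorter and avoids the invariance bookkeeping entirely. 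One minor point: your existence argument invokes coercivity, which needs $\bfS\in\SPD^{d}$; this is the standing assumption of Section~\ref{sec:CoordDesc}, but the paper's verification argument only needs $\bfS_{I,I}\succ 0$ and handles positive-definiteness of $\tilde{\bfQ}$ directly rather than inheriting it from an abstract minimiser.
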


In Lemma~\ref{lem:BlockUpdate}, the case $I=\{i\}$, $i\in[d]$, 
describes an exact line search along $\bfB(i,i)$; 
we refer to such an update as a \emph{$1$-update} (or \emph{order-$1$ update}).  
The case $I=\{i,j\}$, $(i,j)\in\SU$, corresponds to the simultaneous update of $\bfQ_{i,j}$, $\bfQ_{j,i}$, $\bfQ_{i,i}$ and $\bfQ_{j,j}$; 
we refer to such an update as a \emph{$2$-update} (or \emph{order-$2$ update}).  
Higher order updates may also be considered; 
such updates relate to the existence of cliques in the underlying graphs, 
see for instance \cite{speedkiiveri86}. 
In what follows, we only consider \emph{$\{1,2\}$-updates}, that is, order-$1$ or order-$2$ updates.

\subsection{Graph-optimal matrix approximation}
\label{sec:GraphOptApprox}

For $\bfS\in\SPD^{d}$ and $G=([d],\SE)$, the following general procedure (Algorithm~\ref{algo:GraphOptMatApp12Up}) 
produces sequences of iterates converging towards $\bfQ_{\bfS,\SE}$. 
In the framework of Algorithm~\ref{algo:SeqGrowth}, Algorithm~\ref{algo:GraphOptMatApp12Up} can be used to approximate the full correction related to the activation of a given edge in the free set of a current iterate. 
In practice, the optimisation is carried out until a given stopping criterion is satisfied 
(see Remark~\ref{rem:StopAlgo2} below); 
for numerical efficiency, rather than performing matrix inversion, one should use Lemma~\ref{lem:BlockUpdate} to update the inverse of the current iterates. 

\begin{myalgo}{
Coordinate descent with $\{1,2\}$-updates and support restriction. }
\label{algo:GraphOptMatApp12Up}
\textbf{given} $\bfS\in\SPD^{d}$, $G=([d],\SE)$, and $\bfQ^{(0)}\in\SPD^{d}$ such that $\edge(\bfQ^{(0)})\subseteq\SE$. \\
Set $t=1$. \\
\textbf{repeat} 
\begin{enumerate}[nosep,leftmargin=2em]
\item Select $(i,j)^{(t)}\in\SD\cup\SE$ (selection rules are discussed below). 
\item Perform the corresponding $\{1,2\}$-update to define $\bfQ^{(t)}$ from $\bfQ^{(t-1)}$\\
(use Lemma~\ref{lem:BlockUpdate} with $\bfQ=\bfQ^{(t-1)}$, 
$\bfR=(\bfQ^{(t-1)})^{-1}$
and $\tilde{\bfQ}=\bfQ^{(t)}$); 
increment $t$. 
\end{enumerate}
\textbf{return} $\{\bfQ^{(t)}\}_{t\in\BNz}$.
\end{myalgo}

By construction, and independently of the selection rule considered, the sequence $\{f_{\bfS}(\bfQ^{(t)})\}_{t\in\BNz}$ is non-increasing and for all $t\in\BNz$ we have 
\begin{align*}
\bfQ^{(t)}\in
\big\{\bfQ\in\SPD^{d} \, \big|\, f_{\bfS}(\bfQ) \leqslant f_{\bfS}(\bfQ^{(0)})
\text{ and }\edge(\bfQ)\subseteq\SE\big\}
= \CK_{0,\SE}.
\end{align*}
By Lemma~\ref{lem:fSStrictCvx}, $\CK_{0,\SE}$ is convex and compact 
(as it is the intersection of a convex compact sublevel set of $f_{\bfS}$ with a linear subpace of $\Sym^{d}$). 
The strong-convexity of $f_{\bfS}$ and the Lipschitz-continuity of its gradient over $\CK_{0,\SE}$ (see Lemma~\ref{lem:StongConvexity}) thus ensure the convergence of Algorithm~\ref{algo:GraphOptMatApp12Up} for a variety of selection rules, 
including the cyclic rule, the random-uniform rule, or various gradient-based rules;  
see for instance \cite{nesterov2012efficiency, bubeck2015convex, nutini2015coordinate, karimi2016linear} and references therein. The convergence is generally at-worst-linear, 
as 
shown
below for the \emph{Gauss-Southwell} (GS) selection rule, which consists 
of considering the coordinate directions 
\begin{align*}
(i,j)^{(t)}\in\arg\max_{(i,j)\in\SD\cup\SE} |\mip[]{\nabla f_{\bfS}(\bfQ^{(t-1)}), {\bfB(i,j)}}_{\Frob}|, 
\quad t\in\BN.   
\end{align*}

\begin{theorem}\label{thm:ConvergenceAlgo2GS}
For the GS selection rule, the sequence of iterates $\{\bfQ^{(t)}\}_{t\in\BNz}$ generated by Algorithm~\ref{algo:GraphOptMatApp12Up} satisfies
\begin{align*}
f_{\bfS}(\bfQ^{(t)}) - f_{\bfS}(\bfQ_{\bfS,\SE})
\leqslant \Big(1 - \frac{\mu}{mL}\Big)^{t}\big(f_{\bfS}(\bfQ^{(0)}) - f_{\bfS}(\bfQ_{\bfS,\SE})\big), \quad
t\in\BNz,    
\end{align*}
with $\mu=1/\lambda_{\max}^{2}(\CK_{0,\SE})$, 
$L=1/\lambda_{\min}^{2}(\CK_{0,\SE})$ 
and $m=\card(\SD\cup\SE)$. 
\end{theorem}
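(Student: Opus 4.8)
The plan is to read Algorithm~\ref{algo:GraphOptMatApp12Up} under the GS rule as a Gauss--Southwell coordinate descent for the minimisation of the restriction of $f_{\bfS}$ to $\CK_{0,\SE}$, and to run the classical ``sufficient decrease plus Polyak--{\L}ojasiewicz'' argument (cf.\ \cite{nutini2015coordinate, karimi2016linear}). First I would fix the working region. A $\{1,2\}$-update moves the iterates only along directions $\bfB(i,j)$ with $(i,j)\in\SD\cup\SE$, which span $V:=\vspan\{\bfB(i,j)\mid(i,j)\in\SD\cup\SE\}$, and never increases $f_{\bfS}$; together with $f_{\bfS}(\bfQ^{(0)})\geqslant f_{\bfS}(\bfQ_{\bfS,\SE})$ and $\edge(\bfQ_{\bfS,\SE})\subseteq\SE$, this puts all iterates and $\bfQ_{\bfS,\SE}$ into the convex compact set $\CK_{0,\SE}\subseteq V$, on which $\lambda_{\min}(\CK_{0,\SE})>0$ and $\lambda_{\max}(\CK_{0,\SE})<\infty$. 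Integrating the two inequalities of Lemma~\ref{lem:StongConvexity} along segments of $\CK_{0,\SE}$ (admissible by convexity of $\CK_{0,\SE}$) turns them into the usual quadratic bounds: for all $\bfQ,\bfQ'\in\CK_{0,\SE}$,
\[
\mip[]{\nabla f_{\bfS}(\bfQ),\bfQ'-\bfQ}_{\Frob}+\tfrac{\mu}{2}\|\bfQ'-\bfQ\|_{\Frob}^{2}
\;\leqslant\; f_{\bfS}(\bfQ')-f_{\bfS}(\bfQ)
\;\leqslant\; \mip[]{\nabla f_{\bfS}(\bfQ),\bfQ'-\bfQ}_{\Frob}+\tfrac{L}{2}\|\bfQ'-\bfQ\|_{\Frob}^{2},
\]
with $\mu$ and $L$ as in the statement.

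Next I would extract a Polyak--{\L}ojasiewicz inequality restricted to $V$. Applying the left-hand bound with $\bfQ=\bfQ^{(t-1)}$ and $\bfQ'=\bfQ_{\bfS,\SE}$, using $\bfQ_{\bfS,\SE}-\bfQ^{(t-1)}\in V$ to replace $\nabla f_{\bfS}(\bfQ^{(t-1)})$ by its Frobenius-orthogonal projection $P_{V}\nabla f_{\bfS}(\bfQ^{(t-1)})$ onto $V$, and minimising the resulting quadratic over $\|\bfQ_{\bfS,\SE}-\bfQ^{(t-1)}\|_{\Frob}$, yields $\|P_{V}\nabla f_{\bfS}(\bfQ^{(t-1)})\|_{\Frob}^{2}\geqslant 2\mu\big(f_{\bfS}(\bfQ^{(t-1)})-f_{\bfS}(\bfQ_{\bfS,\SE})\big)$. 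Since $\{\bfB(i,j)\}_{(i,j)\in\SD\cup\SE}$ is an orthonormal basis of $V$ and the GS rule selects a direction $\bfB^{(t)}:=\bfB(i,j)^{(t)}$ of maximal $|\mip[]{\nabla f_{\bfS}(\bfQ^{(t-1)}),{\bfB(i,j)}}_{\Frob}|$, expanding $P_{V}\nabla f_{\bfS}(\bfQ^{(t-1)})$ in that basis gives $\mip[]{\nabla f_{\bfS}(\bfQ^{(t-1)}),{\bfB^{(t)}}}_{\Frob}^{2}\geqslant \|P_{V}\nabla f_{\bfS}(\bfQ^{(t-1)})\|_{\Frob}^{2}/m$ with $m=\card(\SD\cup\SE)$.

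The core step is the per-iteration decrease. A $\{1,2\}$-update at $(i,j)^{(t)}$ minimises $f_{\bfS}$ over a coordinate subspace containing the line $\eta\mapsto\bfQ^{(t-1)}+\eta\bfB^{(t)}$, so $f_{\bfS}(\bfQ^{(t)})\leqslant\min_{\eta}\phi(\eta)$, where $\phi(\eta)=f_{\bfS}(\bfQ^{(t-1)}+\eta\bfB^{(t)})$; by Lemma~\ref{lem:fSStrictCvx} (as $\bfS\in\SPD^{d}$) $\phi$ is strictly convex and coercive on its domain, hence has a unique minimiser $\eta^{*}$ with $\phi'(\eta^{*})=0$, while $\phi'(0)=\mip[]{\nabla f_{\bfS}(\bfQ^{(t-1)}),{\bfB^{(t)}}}_{\Frob}$. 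By convexity of $\phi$, every point between $\bfQ^{(t-1)}$ and $\bfQ^{(t-1)}+\eta^{*}\bfB^{(t)}$ has $f_{\bfS}$-value at most $f_{\bfS}(\bfQ^{(t-1)})\leqslant f_{\bfS}(\bfQ^{(0)})$ and lies in $V$, hence in $\CK_{0,\SE}$, so $\phi''\leqslant L$ between $0$ and $\eta^{*}$; then $\phi'(\eta^{*})=0$ forces $|\eta^{*}|\geqslant|\phi'(0)|/L$, and evaluating the $L$-quadratic upper bound at $\eta=-\phi'(0)/L$ (which lies between $0$ and $\eta^{*}$) gives $f_{\bfS}(\bfQ^{(t)})\leqslant\phi(\eta^{*})\leqslant f_{\bfS}(\bfQ^{(t-1)})-\tfrac{1}{2L}\mip[]{\nabla f_{\bfS}(\bfQ^{(t-1)}),{\bfB^{(t)}}}_{\Frob}^{2}$. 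If $\phi'(0)=0$, then $P_{V}\nabla f_{\bfS}(\bfQ^{(t-1)})=0$, so $\bfQ^{(t-1)}=\bfQ_{\bfS,\SE}$ by Lemma~\ref{lem:GraphLikeliExist} and strict convexity, and the theorem's bound holds trivially from $t-1$ on. I expect this to be the delicate paragraph: the one real subtlety is the bookkeeping that the line segments entering the argument stay inside $\CK_{0,\SE}$, which is what makes the constant $L$ of Lemma~\ref{lem:StongConvexity} legitimately usable along them.

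Finally I would chain the three estimates, $f_{\bfS}(\bfQ^{(t-1)})-f_{\bfS}(\bfQ^{(t)})\geqslant \tfrac{1}{2Lm}\|P_{V}\nabla f_{\bfS}(\bfQ^{(t-1)})\|_{\Frob}^{2}\geqslant \tfrac{\mu}{Lm}\big(f_{\bfS}(\bfQ^{(t-1)})-f_{\bfS}(\bfQ_{\bfS,\SE})\big)$. Writing $\delta_{t}=f_{\bfS}(\bfQ^{(t)})-f_{\bfS}(\bfQ_{\bfS,\SE})\geqslant0$, this reads $\delta_{t}\leqslant\big(1-\mu/(mL)\big)\delta_{t-1}$, and iterating from $t=0$ gives the stated geometric decay; here $\mu/(mL)=\lambda_{\min}^{2}(\CK_{0,\SE})/\big(m\,\lambda_{\max}^{2}(\CK_{0,\SE})\big)\in(0,1]$, so the contraction factor lies in $[0,1)$.
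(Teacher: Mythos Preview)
Your proposal is correct and follows essentially the same route as the paper's proof: both establish the sufficient-decrease bound by comparing the $\{1,2\}$-update to an exact line search along the GS direction (your inline argument is precisely the paper's Lemma~\ref{lem:ImprovLSmooth}), both use the GS rule to pass from the selected coordinate to the restricted gradient norm (your $P_{V}\nabla f_{\bfS}$ is the paper's semi-norm $\|\cdot\|_{\Frob,\SE}$), and both derive the PL inequality from $\mu$-strong convexity on $\CK_{0,\SE}$ before chaining into the contraction. The bookkeeping you flag as delicate---that the line-search segment stays in $\CK_{0,\SE}$---is handled identically in the paper via convexity of the sublevel set.
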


\begin{remark}\label{rem:StopAlgo2}
In the framework of Theorem~\ref{thm:ConvergenceAlgo2GS}, 
we have 
(see \eqref{eq:GapBoundImprov} in Appendix~\ref{sec:ProofsAndCo})
\begin{align*}
f_{\bfS}(\bfQ^{(t-1)}) - f_{\bfS}(\bfQ_{\bfS,\SE})
\leqslant \tfrac{mL}{\mu}\big(f_{\bfS}(\bfQ^{(t-1)}) - f_{\bfS}(\bfQ^{(t)})\big), \quad
t\in\BN,
\end{align*}  
which provides an (implicit) 
upper bound on the optimality gap based on the magnitude of the current improvement. 
 Hence, one may decide to stop the descent 
once the current improvement reaches a given fraction of the initial one, that is, once the condition 
\begin{align*}
f_{\bfS}(\bfQ^{(t-1)}) - f_{\bfS}(\bfQ^{(t)})\leqslant \tau \big(f_{\bfS}(\bfQ^{(0)}) - f_{\bfS}(\bfQ^{(1)})\big)
\end{align*}
is satisfied, with $0<\tau\leqslant 1$. 
In our experiments of Section~\ref{sec:Experiments}, we implement such a stopping rule 
in combination with an upper bound on the number of iterations 
of the form $\lceil\alpha m + \beta\rceil$, 
with $\alpha$ and $\beta\geqslant 0$; recall that $m=\card(\SE)$.
\fin
\end{remark}

\section{Relaxed graph sequential growth}
\label{sec:RelaxedGrowth}

The information-optimal graph sequential growth for $\bfS\in\SPD^{d}$ corresponds to a coordinate descent for the minimisation of $f_{\bfS}$ with fully-corrective update and BFCI selection rule (see Section~\ref{sec:GraphGrow}). 
To mitigate the numerical cost inherent to such descents, 
relaxed selection and update rules may be considered.
We focus on the ones below. Contrary to the BFCI rule, the presented rules do not involve the computation of graph-optimal matrices. Many variants of the described strategies could be considered.

\paragraph{Best block improvement (BBI).} Following Lemma~\ref{lem:BlockUpdate}, 
for $\bfQ\in\SPD^{d}$ and $(i,j)\in\SU$, 
setting $\bfR=\bfQ^{-1}$ and $I=\{i,j\}$, 
we define the \emph{block improvement score}
\begin{align*}
\CI_{\bfS}^{\mathrm{B}}\big(\bfQ, (i,j)\big)
=\trace(\bfS_{I,I} (\bfR_{I,I})^{-1}) - 2 - \log(\det(\bfS_{I,I}(\bfR_{I,I})^{-1})). 
\end{align*}
The BBI selection rule consists of selecting the 2-update leading to the best improvement, without accounting for full correction,
that is,
\begin{align*}
(i,j)^{(k)}\in\arg\max_{(i,j)\in\free(\bfQ^{(k-1)})} \CI_{\bfS}^{\mathrm{B}}\big(\bfQ^{(k-1)}, (i,j)\big).  
\end{align*}
 
\paragraph{Gauss-Southwell-Lipschitz (GSL).}  
The GSL rule is a variant of the GS rule that also accounts for the local Lipschitz continuity of the gradient, see \cite{nutini2015coordinate}; 
it is defined as 
\begin{align*}
(i,j)^{(k)}\in\arg\max_{(i,j)\in\free(\bfQ^{(k-1)})} \big(D_{\bfQ^{(k-1)}}f_{\bfS}(\bfB(i,j))\big)^{2}/D^{2}_{\bfQ^{(k-1)}}f_{\bfS}(\bfB(i,j),\bfB(i,j)).   
\end{align*}
Observe that $D^{2}_{\bfQ}f_{\bfS}(\bfB(i,j),\bfB(i,j))=\bfR_{i,i}\bfR_{j,j}+\bfR_{i,j}^{2}$, 
with $\bfR=\bfQ^{-1}$ and $(i,j)\in\SU$. 
In terms of the amount of work, the GSL rule is less costly than the BBI rule.  
Note that the GSL and BBI selection rules can also be used in the framework of Algorithm~\ref{algo:GraphOptMatApp12Up}. 
\bigskip

Algorithm~\ref{algo:RelaxSeqGrowth} below describes a general class of coordinate-descent-type strategies for graph sequential growth. 
The algorithm consists of two nested loops: 
the outer loop performs edge activation, 
and the inner loop uses Algorithm~\ref{algo:GraphOptMatApp12Up} 
to approximate full correction. 
Similarly to Algorithm~\ref{algo:SeqGrowth}, one might stop the procedure in practice once a maximum number of edges $k_{\max}$ is reached 
(partial growth).

\begin{myalgo}{Relaxed graph sequential growth. }
\label{algo:RelaxSeqGrowth}
\textbf{given} $\bfS\in\SPD^{d}$. \\ 
Set $\bfQ^{(0)}=\bfQ_{\bfS,\emptyset}$ and $\SE^{(0)}=\emptyset$ (initialisation).  \\
\textbf{for $k\in[M]$ do} 
\begin{enumerate}[nosep,leftmargin=2em]
\item Select $(i,j)^{(k)}\in\free(\bfQ^{(k-1)})$ 
(edge activation, see selection rules above),  \\ 
and set $\SE^{(k)}=\SE^{(k-1)}\cup\{(i,j)^{(k)}\}$.  
\item Use Algorithm~\ref{algo:GraphOptMatApp12Up}
initialised at $\bfQ^{(k-1)}$
to compute an approximation $\bfQ^{(k)}$ of $\bfQ_{\bfS,\SE^{(k)}}$\\ 
(approximate full correction, see the stopping rule in Remark~\ref{rem:StopAlgo2}). 
\end{enumerate}
\textbf{return} $\{(i,j)^{(k)}\}_{k\in[M]}$. 
\end{myalgo}

In our experiments of Section~\ref{sec:Experiments}, we also implement a variant of Algorithm~\ref{algo:RelaxSeqGrowth} 
relying on approximate full corrections to approximate the BFCI selection rule. More precisely, we systematically use Algorithm~2  
with the stopping rule discussed in Remark~\ref{rem:StopAlgo2} to approximate the required graph-optimal matrices. 
In terms of amount of work and for identical stopping rules for approximate full correction, 
computing the full BFCI growth 
requires $M(M+1)/2$ full corrections, whereas 
the full GSL or BBI growths involve only $M$ full corrections.

\section{Experiments}
\label{sec:Experiments}

We illustrate  the graph-recovery ability 
of the discussed strategies (Algorithm~\ref{algo:RelaxSeqGrowth}) on a series of examples. 
In Sections~\ref{sec:494bus} and \ref{sec:FullSynthe}, 
we consider synthetic data
and compare the proposed graph-sequential-growth procedures with graphical lasso (Glasso; see Remark~\ref{rem:AboutGlasso}), 
and with the naive sequential growths defined by the magnitude of the entries of the sample precision and partial-correlation matrices
(see Remark~\ref{rem:NaiveMagnit}). 
An application involving gene-expression data is presented in Section~\ref{sec:Riboflavin}.

To ensure invertibility and for simplicity,  
we systematically add a ridge penalty to the sample covariance matrix and work with $\bfS=\hat{\Sigb}+\gamma^{2}\bfI_{d}$, with $\gamma^{2}=\rho\sum_{i\in[d]}\hat{\Sigb}_{i,i}/d$ and $\rho=10^{-6}$. 
To approximate full correction, we use Algorithm~\ref{algo:GraphOptMatApp12Up} with GSL selection rule and implement the stopping rule from Remark~\ref{rem:StopAlgo2} with, unless otherwise stated, $\alpha=1$, $\beta=10$ and $\tau=10^{-5}$. 
We deliberately consider matrices of moderate dimension; this allows us to explore the regularisation paths of Glasso, 
compute sequential growths in full, and perform multiple repetitions of the considered experiments.

\begin{remark}\label{rem:AboutGlasso}
We consider the standard Glasso formulation (see Section~\ref{sec:Introduction}) 
and use the \texttt{GGlasso} Python package  \cite{Schaipp2021} for computations. 
For a given matrix $\bfS$, we approximate the regularisation path of Glasso by discretising 
the interval $[0,\max_{i\neq j}|\bfS_{i,j}|]$; 
we use $100$ uniformly-spaced values of the regularisation parameter. 
\fin
\end{remark}

\begin{remark}[naive graph sequential growths]\label{rem:NaiveMagnit}
For $\bfS\in\SPD^{d}$, we refer to
the graph sequential growth induced by the ordering, 
in decreasing order,  
of the absolute values of the upper-diagonal entries of $\Omgb=\bfS^{-1}$
as the \emph{precision-magnitude growth}; ties are broken randomly. 
Similarly, we define the \emph{partial-correlation-magnitude growth} as the 
growth induced by the ordering of the absolute values of the upper-diagonal entries of 
the matrix with $i,j$ entry $-\Omgb_{i,j}/(\Omgb_{i,i}\Omgb_{j,j})^{1/2}$. 
For short, the resulting growth procedures are referred to as {Prec} and {PCorr}, respectively. 
These naive growths are implemented for comparison purposes; they relate to thresholding-based approaches. \fin
\end{remark}

\subsection{Synthetic example based on \texttt{494\_bus}}
\label{sec:494bus}

We consider the symmetric positive-definite matrix $\bfM_{0}$ defined by the \texttt{494\_bus} instance
of the SuiteSparse \mbox{Matrix} Collection \cite{davis2011university}, where we
 extract the second $50\times 50$ diagonal block of $\bfM_{0}$, and denote the resulting matrix by $\bfM$. 
We normalise the matrix $\bfM^{-1}$ (covariance to correlation), 
and use the resulting matrix $\Sigb$ as true covariance. 
We generate random samples from $\CN_{\Sigb}$ of size $n$, and use these samples to perform graphical-model estimation. 
The sparsity pattern of $\Theb=\Sigb^{-1}$ is presented in Figure~\ref{fig:LabelFig1}.
The underlying graph has $d=50$ vertices and $30$ edges,
and the values of the corresponding precisions and partial correlations are given in Figure~\ref{fig:LabelFig3}. 
Note that all the partial correlations are non-negative. 
   
We first compare the accuracy of the edge sets selected by Algorithm~\ref{algo:RelaxSeqGrowth} with GSL, BBI and BFCI selection rules. We consider the sample sizes $n=30$, $90$ and $160$, and in each case perform $100$ repetitions for GSL and BBI (that is, we generate 100 random samples from $\CN_{\Sigb}$),
and $10$ repetitions for BFCI.
Figure~\ref{fig:LabelFig1} shows the resulting graph-recovery ROC and precision-recall curves, the latter placing more emphasis on the accuracies achieved in the early stages of the extraction procedures; FP stands for false positive.
We do not observe any notable differences between the considered selection rules, supporting the use of  GSL or BBI selection rules as numerically efficient approximations of the BFCI rule.  

\begin{figure}[!htbp]
\centering
\includegraphics[width=1.0\linewidth]{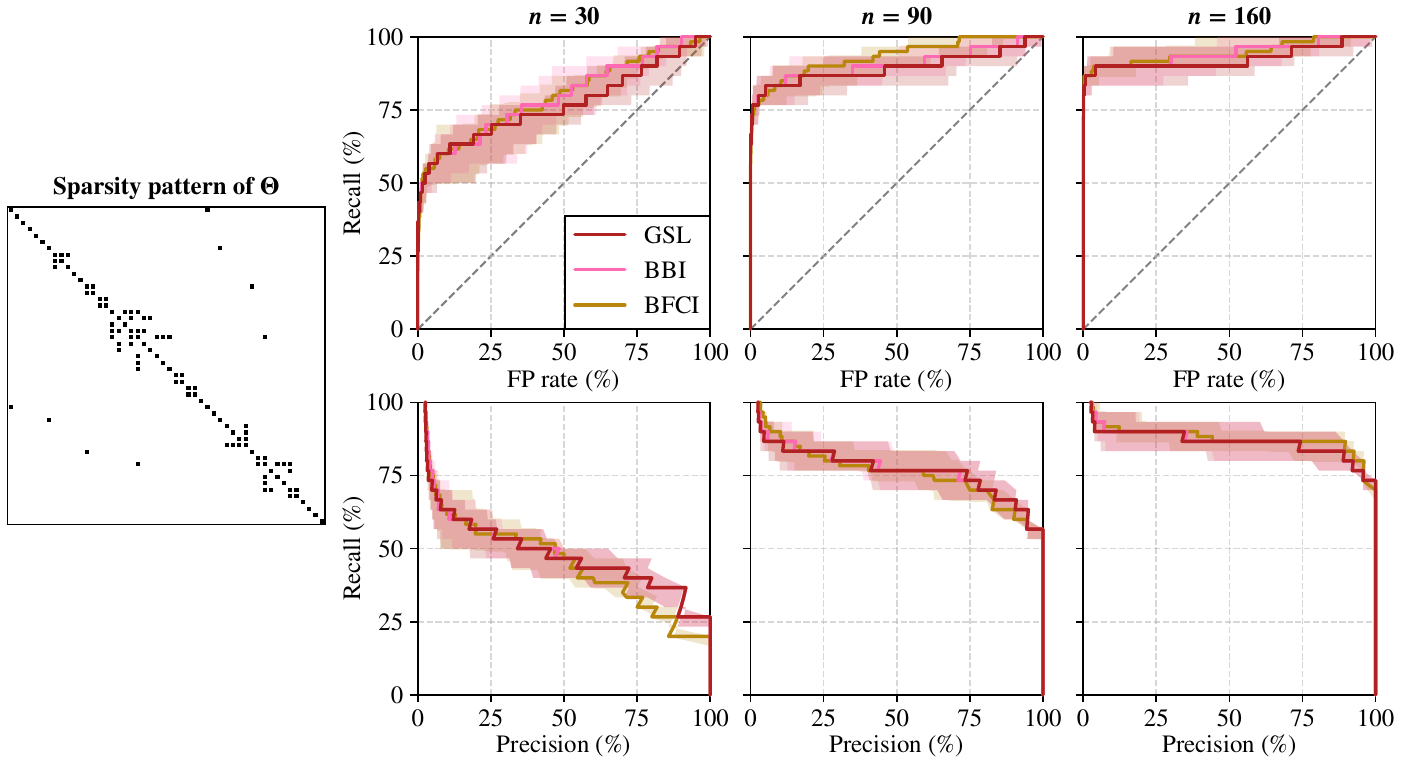}
\caption{ For the \texttt{494\_bus} example,
ROC curves (top row) 
and precision-recall curves (bottom row) for the graphs 
for the sequential growths 
induced by the GSL, BBI and BFCI selection rules with approximate full correction.
For each sample size, $100$ repetitions are performed for GSL and BBI, 
and $10$ for BFCI.  
The curves indicate the median accuracies, 
and the coloured regions the pointwise  interdecile ranges. See Section~\ref{sec:494bus}. }
\label{fig:LabelFig1}
\end{figure}

Next, we compare the accuracy of the graphs recovered via GSL 
with the ones obtained via Glasso and the Prec and PCorr naive sequential growth procedures. We consider the same sample sizes $n=30$, $90$ and $160$, and perform $100$ repetitions for each method.
The results are presented in Figure~\ref{fig:LabelFig2}, where we display both ROC and precision-recall curves.
We observe that in the sparse regime, the best accuracies are systematically achieved by GSL, 
Glasso catching up as the number of extracted edges increases. For $n=30$, the first crossing between the GSL and Glasso median curves corresponds to a graph with $40$ edges (recall that the number of true edges in this example is $30$); for $n=90$, the number of edges at crossing is $39$, and $36$ for $n=160$. 
The GSL variant of Algorithm~\ref{algo:RelaxSeqGrowth} also significantly outperforms Prec and PCorr in the early stages of the growth procedures. 

\begin{figure}[!htbp]
\centering
\includegraphics[width=0.75\linewidth]{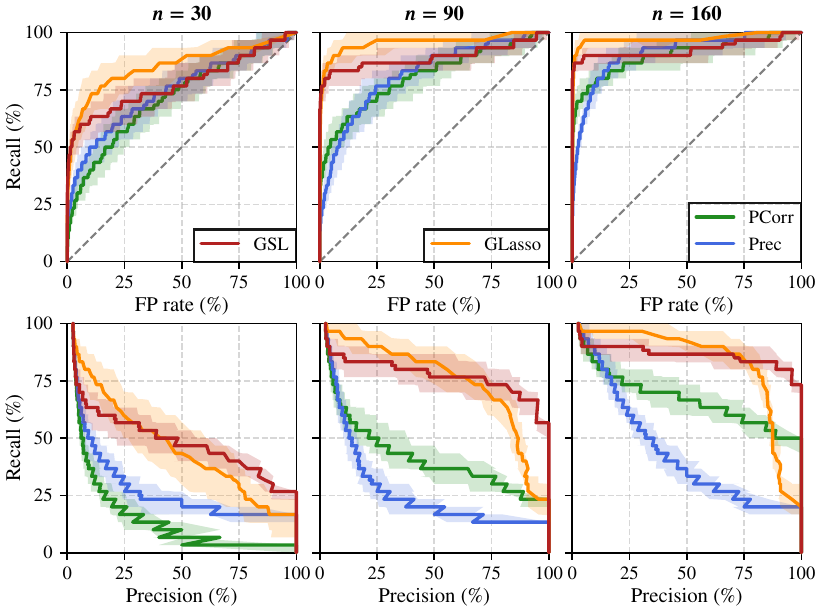}
\caption{
For the \texttt{494\_bus} example,
ROC curves (top row) 
and precision-recall curves (bottom row) for the graphs recovered via GSL, Glasso, Prec and Pcorr.
Sample sizes correspond to columns in the figure. 
The curves indicate the median accuracies over $100$ repetitions   
and the coloured regions the pointwise  interdecile ranges. 
 See Section~\ref{sec:494bus}}.   
\label{fig:LabelFig2}
\end{figure}

To further illustrate the ability of our approach to reliably extract sparse models, 
we use GSL to extract graphs with $k=30$ edges, which is the exact number of edges in the true underlying graph. 
We then assess the frequency of detection of the true edges, together with the distribution of the false positives (FPs).     
For comparison, we also extract graphs with $k=30$ edges using Prec and PCorr.   We set  $n=30$, and  perform $500$ repetitions. 
The results are presented in Figure~\ref{fig:LabelFig3}. 
As already observed, GSL significantly outperforms Prec and PCorr in the considered regime.  
Despite the relatively small sample size, 
certain true edges are detected with high frequency, 
supporting the ability of GSL to efficiently leverage the information geometry of the problem. 
Also, with the exception of a few structural outliers, 
the false-positive distribution is remarkably uniform.

\begin{figure}[!htbp]
\centering
\includegraphics[width=0.95\linewidth]{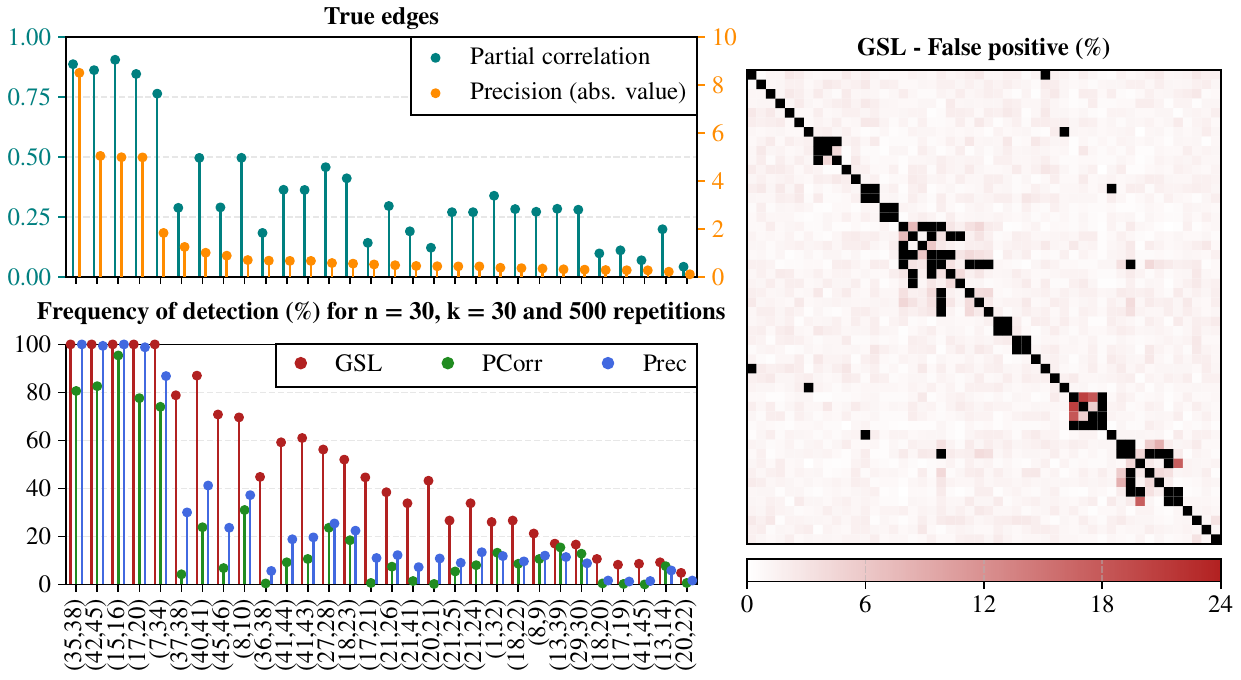}
\caption{Graphical representation of the true partial correlations and precisions of the $30$ true edges of the \texttt{494\_bus} example; 
the edges are ordered by decreasing absolute precision (top-left). For the sequential growth procedures GSL, Prec and PCorr, with $k=30$ 
(that is, $30$ edges are extracted), empirical frequency of detection of the true edges over $500$ repetitions; 
the sample size is $n=30$ (bottom-left). 
The empirical distribution of the false positives for GSL is also presented (right).  
See Section~\ref{sec:494bus}}.  
\label{fig:LabelFig3}
\end{figure}

In a follow up experiment, we assess the positions in which edges are activated during the full GSL growth procedure; we call this position the \emph{activation rank} of the respective edge.
We use $n=90$ and perform $500$ repetitions, and 
 order the edges according to their median activation rank.
Figure~\ref{fig:LabelFig4} displays the rank box plots of the $55$ foremost edges, together with the ones of all remaining true edges, of which there are only two. 
The results strongly support the ability of GSL to consistently identify relevant edges during the early stages of the growths. 
Apart from the edge $(20,22)$, whose signal is hardly distinguishable from noise at $n=90$ 
(see Figure~\ref{fig:LabelFig3}), 
we observe a clear discrepancy between the activation-rank distributions of the true and false positives.  
Note that this type of investigations relates to the notion \emph{stability selection} \cite{meinshausen2010stability,stars2010}; 
see Section~\ref{sec:Riboflavin} for a further illustration.  

\begin{figure}[!htbp]
\centering
\includegraphics[width=1.00\linewidth]{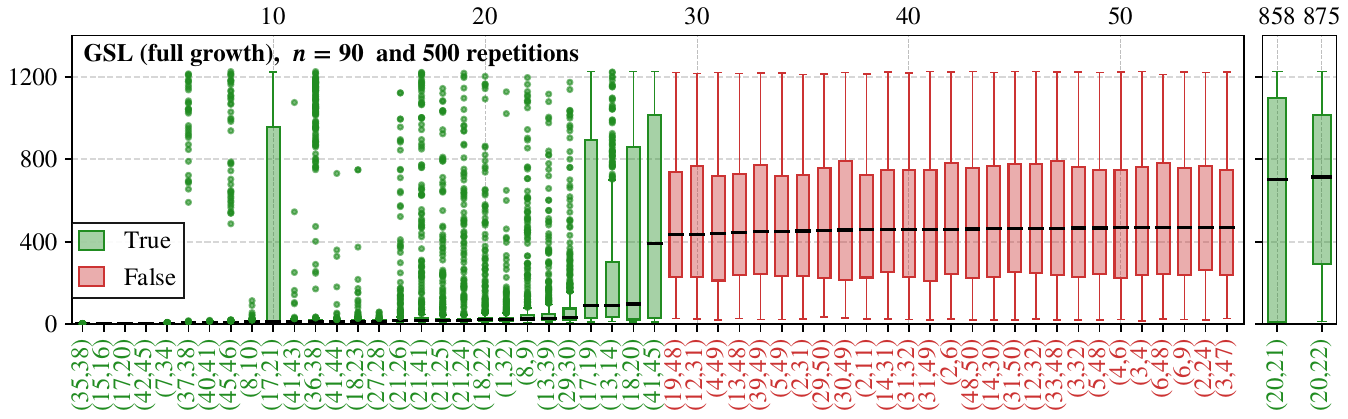}
\caption{
Distribution of the edge activation ranks
under the full GSL growth procedure ($k=1{,}225$)
for the \texttt{494\_bus} example with $n=90$ and $500$ repetitions. 
The edges are ordered according to their median activation position (top horizontal axis). 
The $55$ foremost edges are displayed (left), 
together with the remaining true edges (right). 
See Section~\ref{sec:494bus}.  } 
\label{fig:LabelFig4}
\end{figure}

To conclude this section, we investigate the impact of the precision of the approximate full corrections on the trade-off between computation cost and accuracy of the extraction.
We consider the GSL variant of Algorithm~\ref{algo:RelaxSeqGrowth} and a sample of size $n=90$.  
As stopping parameters for the inner loop (see Remark~\ref{rem:StopAlgo2}), we use $\alpha=1$ and $\beta=10$, 
and consider three different values of $\tau$,  
namely $\tau=10^{-1}$, $10^{-3}$ and $10^{-5}$, corresponding to a low, medium and high precision, respectively. 
For each value of $\tau$, we record the evolution of the number of inner iterations in Algorithm~\ref{algo:RelaxSeqGrowth} 
as a function of the number of outer iterations. 
In parallel, we record the evolution of the recalls of the recovered graphs. 
The results are presented in Figure~\ref{fig:LabelFig5}. 
As expected, the number of inner iterations increases as $\tau$ decreases. 
Interestingly, similar graph-recovery performances are observed for the three considered values of $\tau$, 
supporting the assertion that in the early stages of the growth process, 
relatively loose approximations of the full corrections can be used without jeopardising the accuracy of the recovered graphs.

\begin{figure}[!htbp]
\centering
\includegraphics[width=0.95\linewidth]{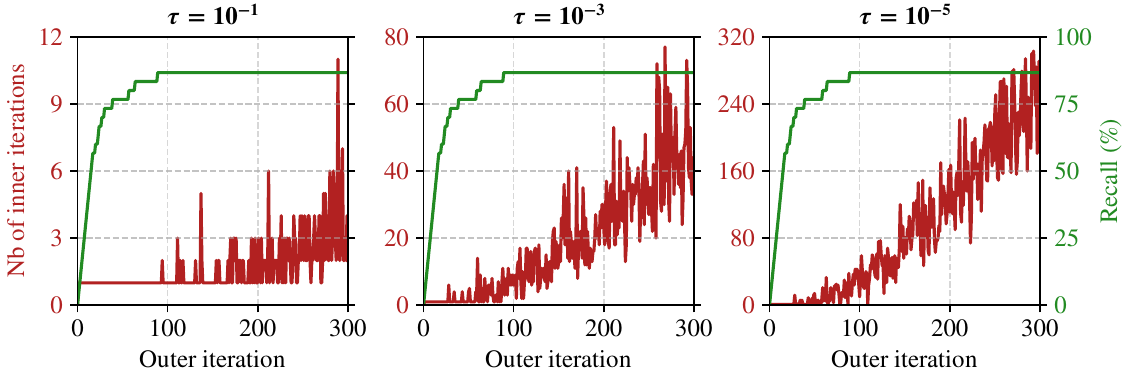}
\caption{For a sample ($n=90$) drawn in the setting of the \texttt{494\_bus} example, 
evolution of the number of inner iterations of the GSL variant of Algorithm~\ref{algo:RelaxSeqGrowth} 
as a function of the number of outer iterations, 
for three different values of the threshold parameter $\tau$, and with $\alpha=1$ and $\beta=10$ (see Remark~\ref{rem:StopAlgo2}). 
The evolution of the accuracy of the underlying graph-recovery process is also presented. 
 See Section~\ref{sec:494bus}.  } 
 \label{fig:LabelFig5}
\end{figure}

\subsection{Fully-synthetic simulation study}
\label{sec:FullSynthe}

We now assess the graph-recovery ability of the considered strategies across diverse scenarios. The considered graph structures are inspired by \cite{jacob2011sparse, mazumder2012graphical} and presented in Figure~\ref{fig:LabelFig6}.
They are encoded by a symmetric base matrix $\bfA\in\Sym^{d}$, whose non-zero upper-diagonal entries are independent realisations 
of a random variable $Z=RU$, where $R$ follows a Rademacher distribution, 
$U$ a uniform distribution over $[0.5,1.5]$, and with $R$ and $U$ independent; the diagonal of $\bfA$ is set to $0$. We next compute a solution $\bdd^{*}=(d_{i}^{*})_{i\in[d]}$
 to the semidefinite program
\begin{align*}
\text{minimise}\sum_{i\in[d]}d_{i}
\text{ subject to }\bfA+\diag(\bdd)\succcurlyeq 0 \text{ and }\bdd\in\BRnn^{d};
\end{align*}
we define $\bfB=\bfA+\diag(\bdd^{*})\in\SPSD^{d}$ and  set
$\bfM = \bfB + \eta \bfI_{d}$, $\eta>0$. 
We normalise the matrix $\bfM^{-1}$ (covariance to correlation), 
and use the resulting matrix $\Sigb$ as true covariance. 
The precision $\Theb=\Sigb^{-1}$ inherits the sparsity pattern from $\bfA$, and increasing $\eta$ reduces the magnitude of the underlying 
true precision and partial correlations (we may interpret $\eta$ as a signal-to-noise-ratio parameter). 
The considered graph structures are as follows.  
\begin{itemize}
\item {\textit{Random.}} The $m\in\BN$ non-zero upper-diagonal entries of $\bfA$ are selected uniformly at random over $\SU$. 
We consider two cases: $m=40$ and $m=200$. 
\item {\textit{Clique.}} The matrix $\bfA$ is populated such that the precision matrix $\Theb$ consists of $5$ diagonal blocks of order $10$.  
The number of true edges is $m=225$. 
\item {\textit{Hub.}} The matrix $\bfA$ is populated such that the resulting graph consists of $5$ distinct hubs, each involving $10$ nodes. The number of true edges is $m=45$. 
\end{itemize}
We use $d=50$ and consider $3$ sample sizes, namely $n=30$, $90$ and $160$, and in each case, 
we perform $100$ repetitions for Glasso, GSL, BBI, Prec and Pcorr, and $10$ for BFCI.

The results are presented in Figure~\ref{fig:LabelFig6} for $\eta=0.25$ (strong-signal case), 
and Figure~\ref{fig:LabelFig7} for $\eta=1$ (weak-signal case).  
They very much support the ability of the considered variants of Algorithm~\ref{algo:RelaxSeqGrowth} 
to reliably identify relevant edges while limiting the number of false detections. 
Depending on the sparsity of the extracted graphs, 
the best median performances are systematically achieved by either Glasso or GSL, BBI or BFCI. 
In line with the \texttt{494\_bus} example, the most accurate sparse graphs are often the ones produced by 
Algorithm~\ref{algo:RelaxSeqGrowth}, except for Clique, where Glasso generally achieves a better accuracy. 
Interestingly, in the Clique example, we observe a relatively significant difference 
between the three variants of Algorithm~\ref{algo:RelaxSeqGrowth}, 
the best performances being achieved by BFCI, followed by BBI and GSL. 
Overall, the differences in accuracy between Glasso and the variants of Algorithm~\ref{algo:RelaxSeqGrowth} 
are more significant for $\eta=0.25$ than for $\eta=1$. 
For the sparsest examples (that is, Random with $m=40$ and Hub) with $\eta=0.25$, 
the efficacy of the three variants of Algorithm~\ref{algo:RelaxSeqGrowth} is remarkable, 
especially for $n=90$ and $n=160$.  

\begin{figure}[!p]
\centering
\includegraphics[width=1.0\linewidth]{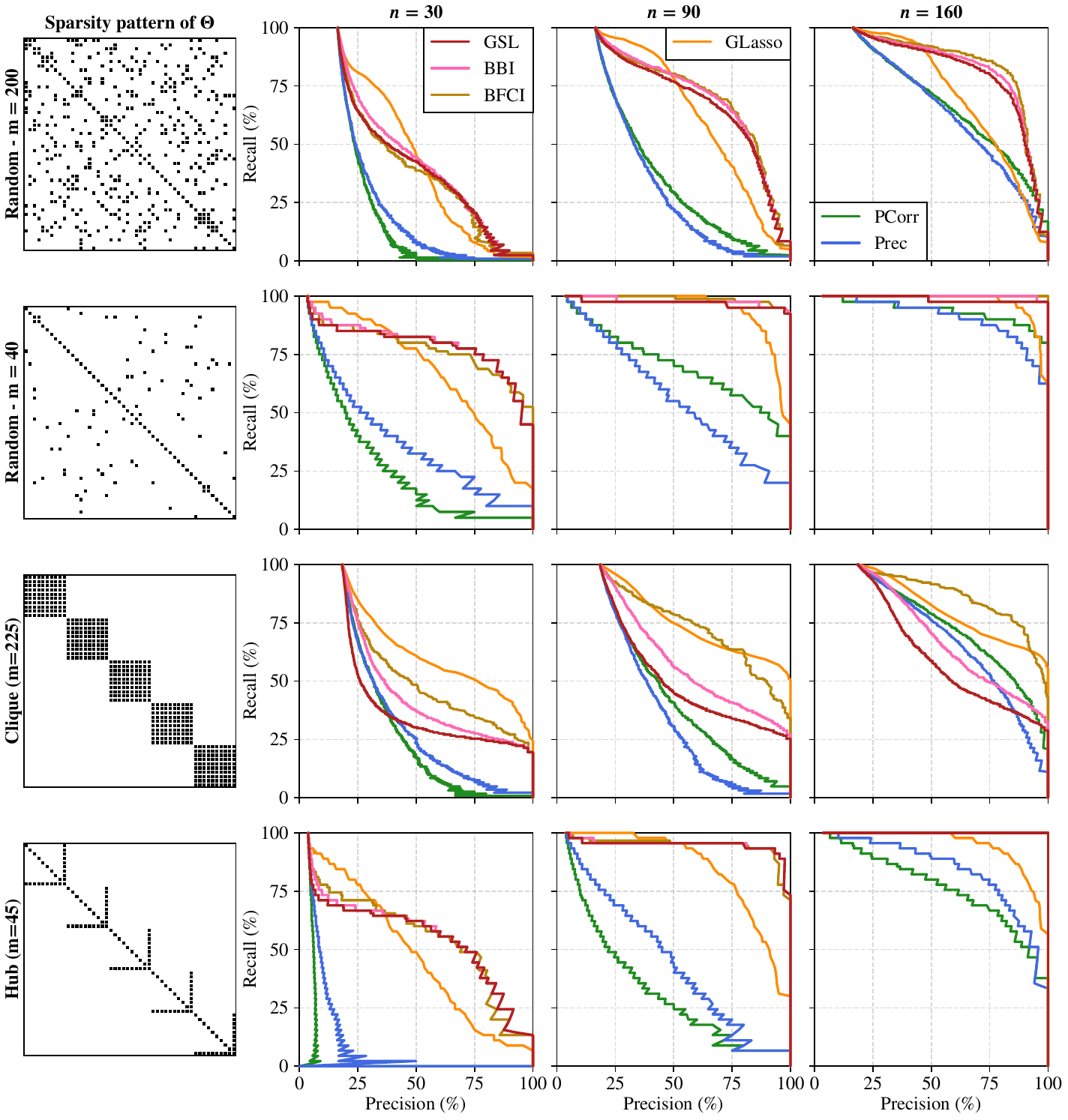}
\caption{For the fully-synthetic examples with $\eta=0.25$, 
precision-recall curves for the graphs recovered via GSL, BBI, BFCI, Glasso, Prec and Pcorr. 
The figure reads like a table, each row corresponding to a given example; 
for each example, three different sample sizes are considered. 
The curves indicate the median accuracies, 
over $10$ repetitions for BFCI, 
and $100$ repetitions for the other methods. 
See Section~\ref{sec:FullSynthe}. } 
 \label{fig:LabelFig6}
\end{figure}

\begin{figure}[!htb]
\centering
\includegraphics[width=1.0\linewidth]{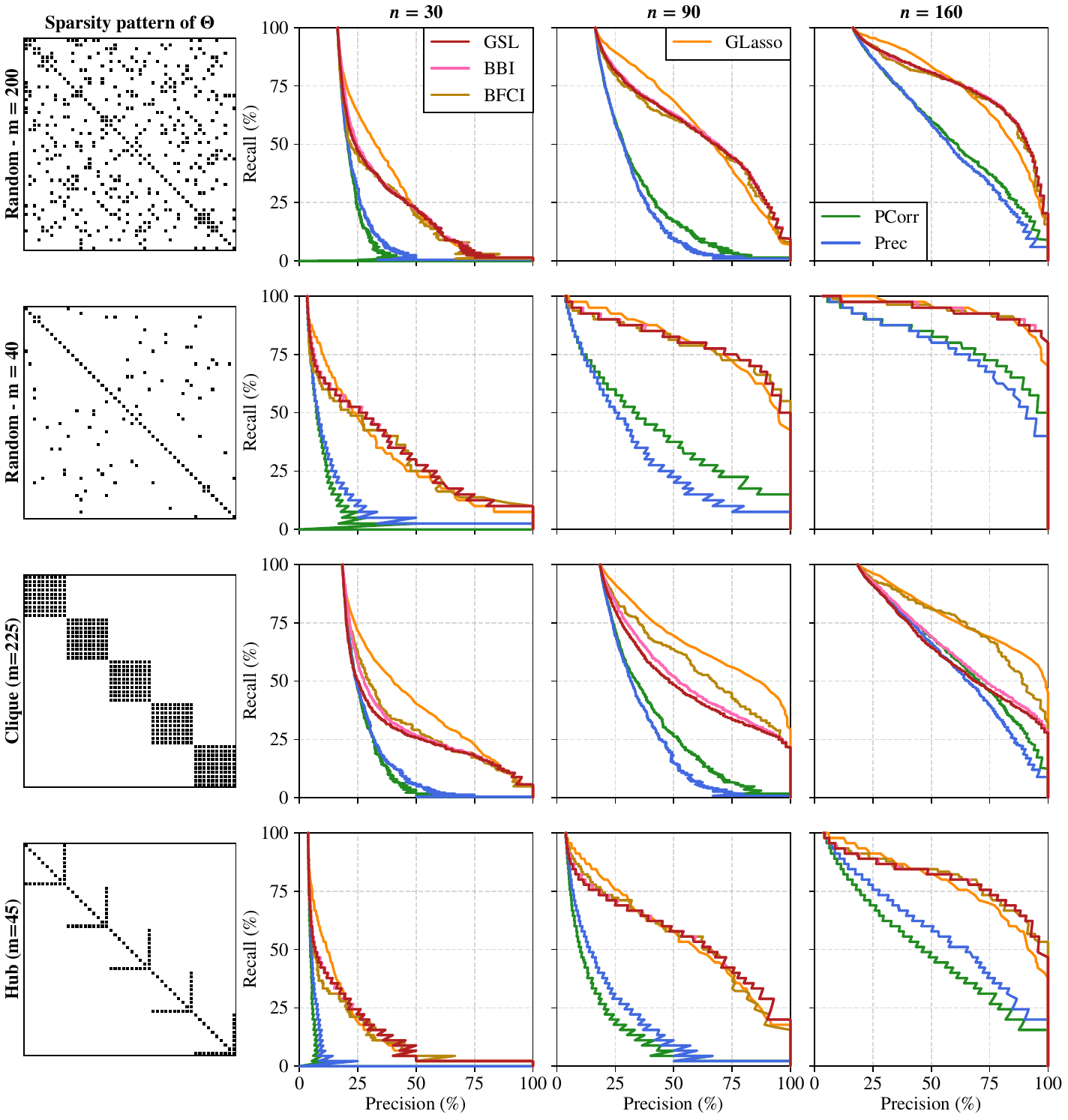}
\caption{Same setting as Figure~\ref{fig:LabelFig6}, but with $\eta=1$. } 
 \label{fig:LabelFig7}
\end{figure}

\subsection{Recovering a gene network: the Riboflavin data}
\label{sec:Riboflavin}

The Riboflavin data relate to the genome of bacterias involved in the production of the riboflavin vitamin. 
The data set consists of ${n=71}$ samples, reporting $4{,}088$ gene expressions; 
it is available in the R package \texttt{hdi} \cite{dezeure2015high}. 
Following \cite{buhlmann2014high, laszkiewicz2021thresholded},  
we only consider the $d=100$ genes with the highest empirical variances, 
and scale the data using the nonparanormal transformation \cite{liu2009nonparanormal}. 
We focus on the characterisation of sparse models (approximation of a non-necessarily sparse precision matrix), 
and more specially on the identification of gene clusters (gene-expression clustering). 
We use the GSL variant of Algorithm~\ref{algo:RelaxSeqGrowth}.

To gain insight into the range for which the generated graphs are likely to contain high proportions of true edges, 
we compute the edge-activation ranks induced by $500$ random subsamples, 
without replacement, 
of size $\lfloor n/2\rfloor=35$ 
(see \cite{meinshausen2010stability}),   
and order the edges according to their median activation position. 
In Figure~\ref{fig:LabelFig8}, 
we present the activation-rank box plots for the $300$ foremost edges. 
We do not observe 
a clear transition in the activation-rank distributions  
(such a transition is observed in Figure~\ref{fig:LabelFig4}), 
so that a visual inspection does not rule out that the underlying true graph could contain more than $300$ edges;  
observe however that the sample size is small, and that subsampling might induce artefacts. 
Sparse graphs containing up to approximately $100$ edges appear likely to contain significant proportions of true edges.

\begin{figure}[!htb]
\centering
\includegraphics[width=1.00\linewidth]{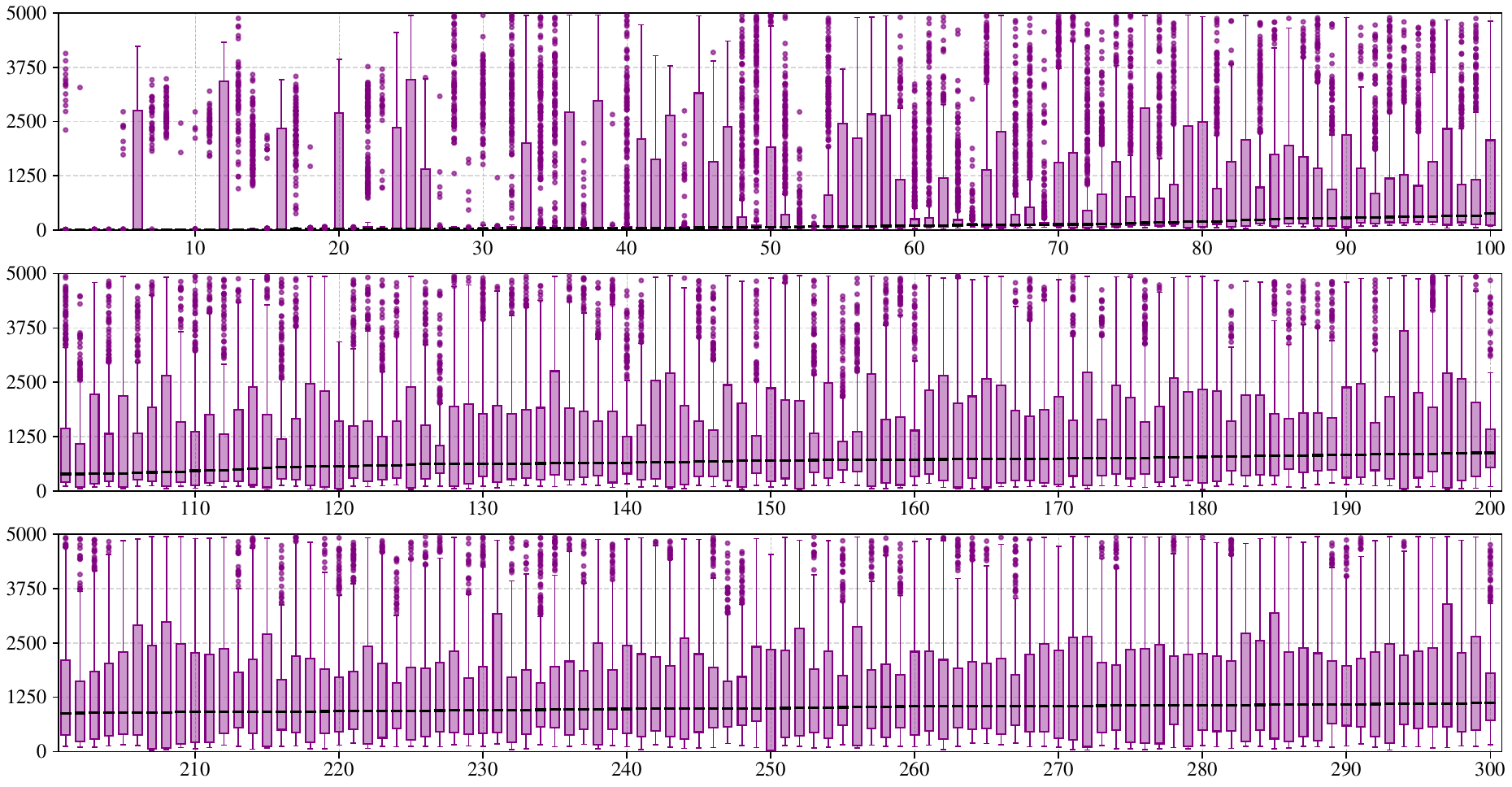}
\caption{For the Riboflavin data and $500$ random subsamples of size $\lfloor n/2\rfloor=35$ (without replacement),
distribution of the edge-activation ranks
under the full GSL growth procedure ($k=4{,}950$). 
The edges are ordered according to their median activation position; 
the $300$ foremost edges are displayed. 
See Section~\ref{sec:Riboflavin}.  } 
 \label{fig:LabelFig8}
\end{figure}

In Figure~\ref{fig:LabelFig9}, we present the graphs with $k=40$, $80$ and $120$ edges
defined by the subsampling-based activation ranks displayed in Figure~\ref{fig:LabelFig8};
in each case, the edge set consists of the $k$ foremost edges in terms of median activation rank. 
For comparison, we also present in Figure~\ref{fig:LabelFig10}, for the same values of $k$, the graphs produced by the GSL growth procedure applied to the full data set.
The graphs reveal some intriguing patterns,  
illustrating the potential interest of the considered growth procedures for gene-expression clustering.
Notably, the number of clusters can be easily adjusted by varying the edge-number parameter $k$. 
The graphs obtained for $k=80$ share similarities with the one presented in \cite{laszkiewicz2021thresholded}, 
while the ones obtained for $k=120$ appear relatively close to the one produced in \cite{buhlmann2014high}. 
Note that the graphs in Figure~\ref{fig:LabelFig9} require significantly more compute than their counterparts in Figure~\ref{fig:LabelFig10}. More precisely, Figure~\ref{fig:LabelFig9} involves $500$ full growths, against one partial growth for Figure~\ref{fig:LabelFig10}.

\begin{figure}[!htb]
\centering
\includegraphics[width=1.00\linewidth]{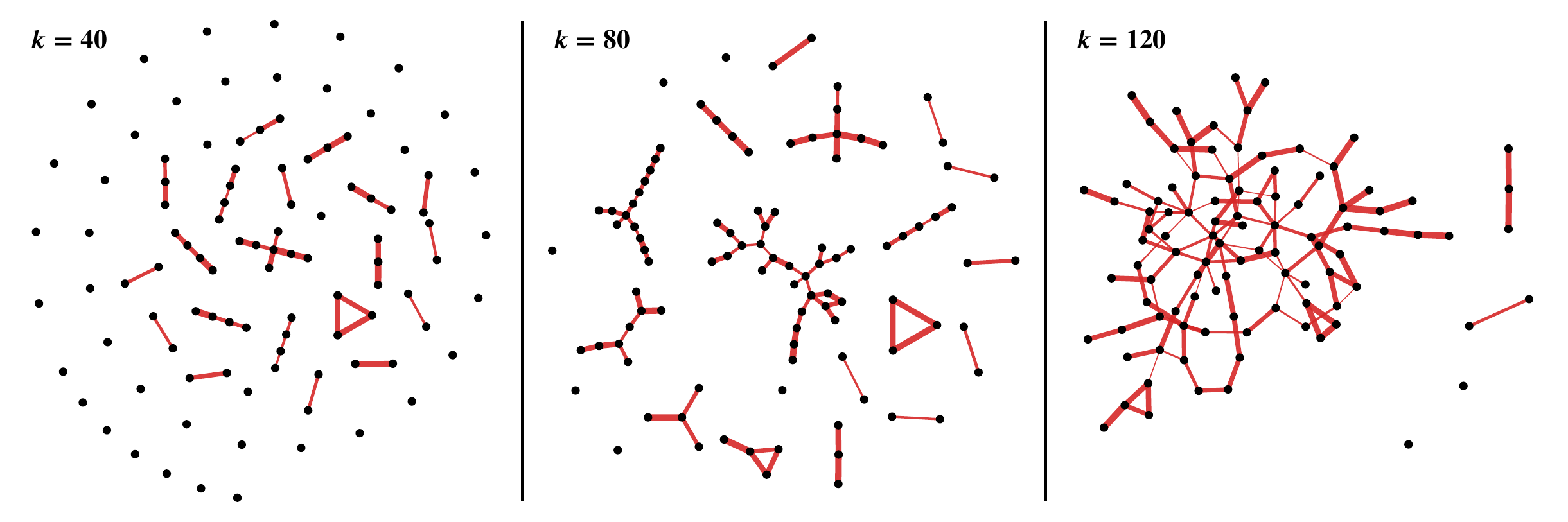}
\caption{ For the Riboflavin data,
graphical representation of three graphs defined by 
subsampling-based activation ranks displayed in Figure~\ref{fig:LabelFig8}.  
For a given value of $k$, the edge set of the graph consists of the $k$ foremost edges 
in Figure~\ref{fig:LabelFig8}. 
The width of the edges relates to their median rank, 
wider lines indicating lower median ranks.  
See Section~\ref{sec:Riboflavin}. } 
 \label{fig:LabelFig9}
\end{figure}

\begin{figure}[!htb]
\centering
\includegraphics[width=1.00\linewidth]{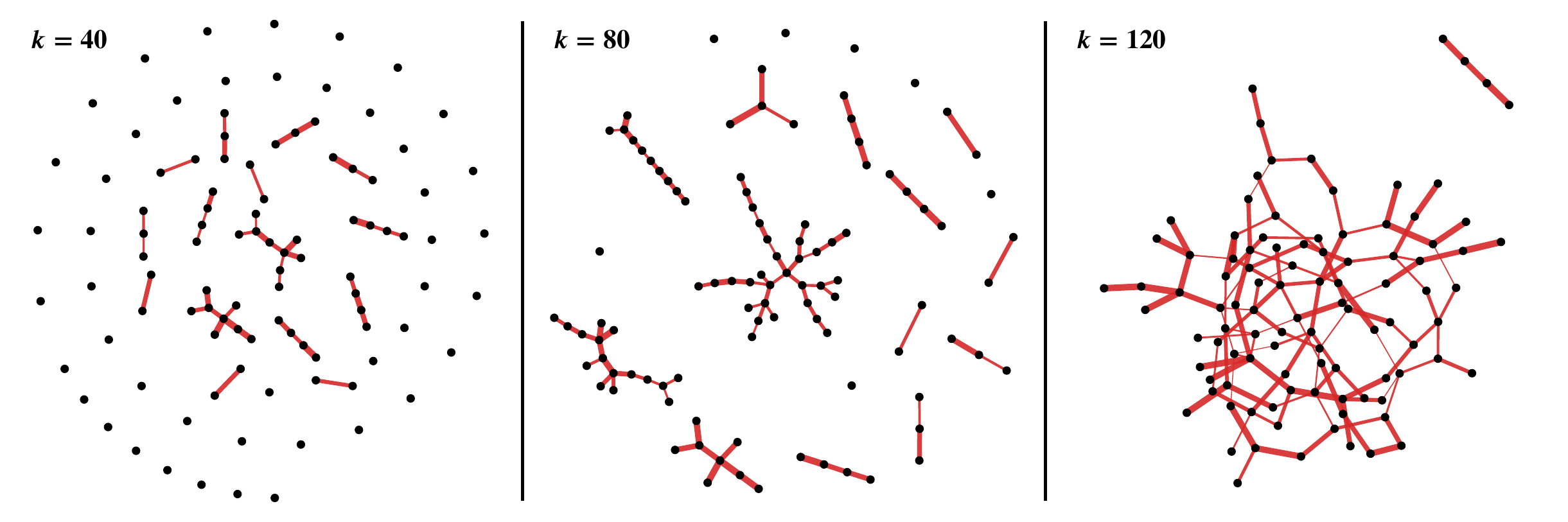}
\caption{For the Riboflavin data,
graphical representation of three graphs extracted by the GSL sequential growth procedure applied to the full data set.  
The width of the edges relates to their position in the growth process, 
wider lines indicating earlier activations.  
See Section~\ref{sec:Riboflavin}}. 
 \label{fig:LabelFig10}
\end{figure}

\section{Concluding discussion}
\label{sec:Conclusion}

We presented a class of regularisation-free 
approaches for Gaussian graphical inference 
based on the information-geometry-driven sequential growth of initially edgeless graphs.      
Leveraging the analogy between coordinate update and edge activation,        
we characterised the fully-corrective descents corresponding to information-optimal growths, 
and proposed numerically efficient strategies for their approximation. 
Our numerical experiments highlight the ability of the discussed procedures 
to reliably extract sparse models while limiting the number of false detections,  
providing an interesting complement to the existing methodologies.  
Notably, the sequential nature of the discussed procedures offers an original diagnostic tool 
where the activation rank of the edges acts as a measure of their informational relevance, 
allowing for a nuanced assessment of the detected dependence structures. 
Aligning with the idea of stability selection,  
we also illustrated how the considered approaches can be combined with subsampling techniques 
to support the identification of relevant graphs. 

Our investigations mostly focused on the graph-recovery ability of the proposed procedures,    
and follow-up studies could assess their relevance for parameter learning. 
The implementations of sequential growth procedures in constrained settings
(such as total positivity, see for instance \cite{lauritzen2022locally, lauritzen2019maximum}) 
could also offer an interesting avenue for future research. 
Finally, in the stability-selection framework, 
the design of relevant tests on the edge-activation ranks,  
together with the implementation of censoring techniques to account for partial growths,   
could help further improve the deployability and scalability of sequential-growth-based methodologies.

\appendix

\section{Proofs and technical results}
\label{sec:ProofsAndCo}

This section gathers proofs of the results stated in the main body of the paper. 
It also contains a technical result (Lemma~\ref{lem:ImprovLSmooth}) used to prove the convergence of Algorithm~\ref{algo:GraphOptMatApp12Up}. 
Recall that 
\begin{align}\label{eq:TraceProdIneq}
\lambda_{\min}(\bfQ)\trace(\bfM)\leqslant \trace(\bfQ\bfM)\leqslant \lambda_{\max}(\bfQ)\trace(\bfM),
\end{align}
for all $\bfQ\in \Sym^{d}$ and $\bfM\in\SPSD^{d}$.

\vspace{0.5\baselineskip}
\noindent\textit{Proof of Lemma~\ref{lem:fSStrictCvx}}. 
Set $\bfQ^{-1}=\bfA^{2}$, with $\bfA\in\SPD^{d}$; 
the strict convexity of $f_{\bfS}$ follows by observing that 
$D_{\bfQ}^{2} f_{\bfS}(\bfH,\bfH) = \|\bfA\bfH\bfA\|_{\Frob}^{2} \geqslant 0$, $\bfH\in\Sym^{d}$, 
with equality if and only if $\bfH=0$. 

Assume that $\bfS$ is invertible. We have $D_{\bfQ} f_{\bfS}=0$ if and only if $\bfQ=\bfS^{-1}$, 
characterising the minimum of $f_{\bfS}$. 
Consider a sequence $(\bfQ_{k})_{k\in\BN}\subset\SPD^{d}$. 
If $\lambda_{\max}(\bfQ_{k})\to+\infty$, then from \eqref{eq:TraceProdIneq}, we have
\begin{align*}
f_{\bfS}(\bfQ_{k})\geqslant
\lambda_{\min}(\bfS)\lambda_{\max}(\bfQ_{k}) - d\log(\lambda_{\max}(\bfQ_{k})) \to +\infty.      
\end{align*}
 Also, as $\trace(\bfS\bfQ_{k})\geqslant 0$, $k\in\BN$,
 we obtain that if $\det(\bfQ_{k})\to0$, then $f_{\bfS}(\bfQ_{k})\to+\infty$, 
 proving the coercivity of $f_{\bfS}$ over $\SPD^{d}$.  

Assume that $\bfS$ is singular.  
Setting $\bfS=\diag(\lambda_{1},\cdots,\lambda_{p}, 0, \cdots, 0)$, 
with $p<d$ and $\lambda_{i}>0$,   
and $\bfQ_{k}=\diag(1/\lambda_{1},\cdots,1/\lambda_{p}, k, \cdots, k)$, $k\in\BN$, 
we obtain 
\begin{align*}
f_{\bfS}(\bfQ_{k})
= p + \sum_{i=1}^{p}\log(\lambda_{i}) - (d-p)\log(k)
\to-\infty \quad \text{as $k\to+\infty$. }
\end{align*}
A similar argument applies to a general matrix $\bfS$ via diagonalisation. 
\hfill\qed

\vspace{0.5\baselineskip}
\noindent\textit{Proof of Lemma~\ref{lem:StongConvexity}.}
We have 
$\nabla f_{\bfS}(\bfQ_{1})-\nabla f_{\bfS}(\bfQ_{2})
=\bfQ_{2}^{-1}-\bfQ_{1}^{-1}=\bfQ_{1}^{-1}(\bfQ_{1}-\bfQ_{2})\bfQ_{2}^{-1}$. 
Both inequalities then follow from \eqref{eq:TraceProdIneq}. 
\hfill\qed

\vspace{0.5\baselineskip}
\noindent\textit{Proof of Lemma~\ref{lem:GraphLikeliExist}.}
The existence of a minimum follows from 
Lemma~\ref{lem:fSStrictCvx}. 
We indeed have
\begin{align*} 
\inf\big\{f_{\bfS}(\bfQ) \,\big|\, \edge(\bfQ)\subseteq\SE \big\}
=\inf\big\{f_{\bfS}(\bfQ) \,\big|\, \edge(\bfQ)\subseteq\SE 
\text{ and } f_{\bfS}(\bfQ)\leqslant f_{\bfS}( \bfQ_{\bfS,\emptyset} ) \big\}, 
\end{align*}
and $\CK=\big\{ \bfQ\in\SPD^{d} \,\big|\, \edge(\bfQ)\subseteq\SE 
\text{ and } f_{\bfS}(\bfQ)\leqslant f_{\bfS}( \bfQ_{\bfS,\emptyset} ) \big\}$
is convex and compact (intersection of a convex compact sublevel set of $f_{\bfS}$ with a linear subspace of $\Sym^{d}$) and non-empty as $\bfQ_{\bfS,\emptyset}$ is an element. 
Strict convexity of $f_{\bfS}$ ensures uniqueness.  
The zeroing of the entries 
of the gradient is a consequence of the openness of
$\{\bfQ\in\SPD^{d} \,\big|\, \edge(\bfQ)\subseteq\SE\}$ in $\vspan\{\bfB(i,j) \,|\, (i,j)\in\SD\cup\SE\}$.  
\hfill\qed

\vspace{0.5\baselineskip}
\noindent\textit{Proof of Lemma~\ref{lem:BlockUpdate}.} 
We have $\tilde{\bfQ}\in(\bfQ + \vspan\{\bfB(i,j) \,|\, \text{$i, j\in I$, $i\leqslant j$}\})$, and a direct computation gives 
$\tilde{\bfR}^{-1}=\tilde{\bfQ}$. 
Moreover, we have $\tilde{\bfQ}\succ 0$ (that is, $\tilde{\bfQ}$ is symmetric positive-definite), 
since $\tilde{\bfQ}_{I^{c},I^{c}}=\bfQ_{I^{c},I^{c}}\succ 0$ 
and $\tilde{\bfQ}_{I,I}-\tilde{\bfQ}_{I,I^{c}}(\tilde{\bfQ}_{I^{c},I^{c}})^{-1}\tilde{\bfQ}_{I^{c},I}=(\bfS_{I,I})^{-1}\succ 0$,
see e.g. \cite[Theorem~7.7.7]{horn2013matrix}. 
As $[\nabla f_{\bfS}(\tilde{\bfQ})]_{I,I}=0$, the optimality follows. 
A direct computation provides the value of the improvement $f_{\bfS}(\bfQ) - f_{\bfS}(\tilde{\bfQ})$. 
\hfill\qed

\begin{lemma}\label{lem:ImprovLSmooth}
Consider $\bfS$ and $\bfQ\in\SPD^{d}$,  
and $\bfH\in\Sym^{d}\backslash\{0\}$.  
For $\alpha\in\BR$, set $\bfQ_{\alpha}=\bfQ+\alpha\bfH$. 
There exists $\alpha^{*}$ such that $\mip[]{\nabla f_{\bfS}(\bfQ_{\alpha^{*}}), \bfH}_{\Frob}=0$. 
Moreover, for any convex set $\CK\subset\SPD^{d}$ such that $\bfQ$ and $\bfQ_{\alpha^{*}}\in\CK$ 
and $\lambda_{\min}(\CK)>0$,
we have $|\alpha^{*}|\geqslant |\mip[]{\nabla f_{\bfS}(\bfQ), \bfH}_{\Frob}|/(L\|\bfH\|_{\Frob}^{2})$ and 
$f_{\bfS}(\bfQ) - f_{\bfS}(\bfQ_{\alpha^{*}})
\geqslant \mip[]{\nabla f_{\bfS}(\bfQ), \bfH}_{\Frob}^{2}/(2L\|\bfH\|_{\Frob}^{2})$, 
with $L=1/\lambda_{\min}^{2}(\CK)$. 
\end{lemma}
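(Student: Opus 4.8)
The plan is to treat the statement as a one–dimensional smoothness-plus-first-order-optimality estimate applied along the line $\alpha\mapsto\bfQ_\alpha=\bfQ+\alpha\bfH$. Define $g(\alpha)=f_{\bfS}(\bfQ_\alpha)$ for $\alpha$ in the open interval $J=\{\alpha\in\BR\,|\,\bfQ_\alpha\in\SPD^d\}$; since $f_{\bfS}$ is coercive in the boundary sense of the excerpt and strictly convex (Lemma~\ref{lem:fSStrictCvx}), $g$ is a strictly convex function on $J$ that tends to $+\infty$ at the endpoints of $J$, hence attains a unique minimum at some $\alpha^*\in J$, and the chain rule gives $g'(\alpha)=\mip[]{\nabla f_{\bfS}(\bfQ_\alpha),\bfH}_{\Frob}$, so that $g'(\alpha^*)=0$. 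This establishes the first assertion.

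Next I would pin down the sign and magnitude of $\alpha^*$ using the Lipschitz bound from Lemma~\ref{lem:StongConvexity}. On the convex set $\CK$ with $\lambda_{\min}(\CK)>0$ we have $\|\nabla f_{\bfS}(\bfQ_1)-\nabla f_{\bfS}(\bfQ_2)\|_{\Frob}\leqslant L\|\bfQ_1-\bfQ_2\|_{\Frob}$ with $L=1/\lambda_{\min}^2(\CK)$; restricting to the segment between $\bfQ$ and $\bfQ_{\alpha^*}$ (both in $\CK$ by hypothesis, and the segment lies in $\CK$ by convexity) yields $|g'(\alpha)-g'(\alpha')|\leqslant L\|\bfH\|_{\Frob}^2|\alpha-\alpha'|$ for $\alpha,\alpha'$ in the interval with endpoints $0$ and $\alpha^*$. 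Taking $\alpha=0$, $\alpha'=\alpha^*$ and using $g'(\alpha^*)=0$ gives $|g'(0)|\leqslant L\|\bfH\|_{\Frob}^2|\alpha^*|$, i.e. $|\alpha^*|\geqslant|\mip[]{\nabla f_{\bfS}(\bfQ),\bfH}_{\Frob}|/(L\|\bfH\|_{\Frob}^2)$, which is the second claim.

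For the improvement bound I would integrate: $g(0)-g(\alpha^*)=-\int_0^{\alpha^*}g'(\alpha)\,\dd\alpha$. Writing $g'(\alpha)=g'(0)+(g'(\alpha)-g'(0))$ and using the Lipschitz estimate $|g'(\alpha)-g'(0)|\leqslant L\|\bfH\|_{\Frob}^2|\alpha|$, one gets the familiar descent-lemma inequality $g(\alpha^*)\leqslant g(0)+g'(0)\alpha^*+\tfrac12 L\|\bfH\|_{\Frob}^2(\alpha^*)^2$; minimising the right-hand quadratic in $\alpha^*$ (equivalently, plugging the exact minimiser $-g'(0)/(L\|\bfH\|_{\Frob}^2)$ of the upper bound and noting $g(\alpha^*)$ is no larger than the value at any point) yields $g(0)-g(\alpha^*)\geqslant g'(0)^2/(2L\|\bfH\|_{\Frob}^2)=\mip[]{\nabla f_{\bfS}(\bfQ),\bfH}_{\Frob}^2/(2L\|\bfH\|_{\Frob}^2)$.

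The only genuinely delicate point is the bookkeeping about which set the relevant segment lies in: the Lipschitz constant $L$ is only available on $\CK$, so I must make sure the whole segment from $\bfQ$ to $\bfQ_{\alpha^*}$ stays inside $\CK$ — this is exactly why the statement assumes $\CK$ convex with $\bfQ,\bfQ_{\alpha^*}\in\CK$, and convexity handles it. A secondary subtlety is justifying that $\alpha^*$ is an interior minimiser of $g$ on $J$ (so that $g'(\alpha^*)=0$ rather than a boundary inequality); this follows from the blow-up of $f_{\bfS}$ at the boundary of $\SPD^d$, which is the coercivity notion introduced just before Lemma~\ref{lem:fSStrictCvx}. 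Everything else is routine single-variable calculus.
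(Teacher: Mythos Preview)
Your proof is correct and follows essentially the same route as the paper's: existence of $\alpha^*$ via coercivity and strict convexity, the bound on $|\alpha^*|$ via the Lipschitz estimate from Lemma~\ref{lem:StongConvexity} applied at the endpoints $0$ and $\alpha^*$, and the improvement bound via the descent-lemma quadratic evaluated at $\bar\alpha=-g'(0)/(L\|\bfH\|_{\Frob}^2)$ combined with $g(\alpha^*)\leqslant g(\bar\alpha)$. The one point the paper makes explicit and you leave implicit is that $\bar\alpha$ lies between $0$ and $\alpha^*$ (same sign, $|\bar\alpha|\leqslant|\alpha^*|$), which is needed so that your descent-lemma inequality---established only on the segment in $\CK$---actually applies at $\bar\alpha$; this follows immediately from your second step, but is worth stating.
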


\begin{proof}
The existence of $\alpha^{*}$ follows by convexity and coercivity of $f_{\bfS}$ (Lemma~\ref{lem:fSStrictCvx}); 
observe that $\bfQ_{\alpha^{*}}$ is the unique minimum of $f_{\bfS}$ over the convex set 
$\{\bfQ_{\alpha} \,|\, \alpha\in\BR\}\cap\SPD^{d}$ 
(exact line search from $\bfQ$ along $\bfH$). 
By Cauchy-Schwartz
and from the $L$-smoothness of $f_{\bfS}$ over $\CK$ (Lemma~\ref{lem:StongConvexity}), 
we get 
\begin{align*}
|\mip[]{\nabla f_{\bfS}(\bfQ), \bfH}_{\Frob}|
&= |\mip[]{\nabla f_{\bfS}(\bfQ)-\nabla f_{\bfS}(\bfQ_{\alpha^{*}}), \bfH}_{\Frob}|\\
&\leqslant \|\nabla f_{\bfS}(\bfQ)-\nabla f_{\bfS}(\bfQ_{\alpha^{*}})\|_{\Frob}\|\bfH\|_{\Frob}
\leqslant |\alpha^{*}|L\|\bfH\|_{\Frob}^{2}, 
\end{align*}
providing the expected lower bound for $|\alpha^{*}|$. 

From the $L$-smoothness of $f_{\bfS}$, 
for all $\alpha$ such that $\bfQ_{\alpha}\in\CK$, 
we have 
\begin{align*}
f_{\bfS}(\bfQ_{\alpha})
\leqslant f_{\bfS}(\bfQ) + \alpha\mip[]{\nabla f_{\bfS}(\bfQ), \bfH}_{\Frob}
+\alpha^{2}L\|\bfH\|_{\Frob}^{2}/2
=\varphi(\alpha).  
\end{align*}
The minimum of the function $\varphi$ over $\BR$ is reached at 
$\bar{\alpha}=-\mip[]{\nabla f_{\bfS}(\bfQ), \bfH}_{\Frob}/(L\|\bfH\|_{\Frob}^{2})$. 
We thus have $|\alpha^{*}| \geqslant |\bar{\alpha}|$, and $\alpha^{*}$ and $\bar{\alpha}$ have same sign;   
the convexity of $\CK$ entails $\bfQ_{\bar{\alpha}}\in\CK$. 
We obtain
\begin{align*}
f_{\bfS}(\bfQ_{\alpha^{*}})
\leqslant f_{\bfS}(\bfQ_{\bar{\alpha}})
\leqslant \varphi(\bar{\alpha})
=f_{\bfS}(\bfQ) - \mip[]{\nabla f_{\bfS}(\bfQ), \bfH}_{\Frob}^{2}/(2L\|\bfH\|_{\Frob}^{2}),   
\end{align*}
completing the proof. 
\end{proof}

\noindent\textit{Proof of Theorem~\ref{thm:ConvergenceAlgo2GS}. } 
Consider $t\in\BN$ and set $\bfQ=\bfQ^{(t-1)}$ and $\tilde{\bfQ}=\bfQ^{(t)}$.  
Also, denote by $\check{\bfQ}$ the iterate of an exact line search from $\bfQ^{(t-1)}$ along $\bfB=\bfB(i,j)^{(t)}$. 
By definition of  $\tilde{\bfQ}$ and $\check{\bfQ}$, 
and from Lemma~\ref{lem:ImprovLSmooth} with $\CK=\CK_{0,\SE}$, we have 
\begin{align*}
f_{\bfS}(\bfQ)-f_{\bfS}(\tilde{\bfQ})
\geqslant f_{\bfS}(\bfQ)-f_{\bfS}(\check{\bfQ})
\geqslant \mip[]{\nabla f_{\bfS}(\bfQ), \bfB}_{\Frob}^{2}/(2L). 
\end{align*}  
For $\SE\subseteq\SU$, define the semi-norm 
$\|\bfM\|_{\Frob,\SE}^{2}=\sum_{(i,j)\in\SD\cup\SE}\mip[]{\bfM, {\bfB(i,j)}}_{\Frob}^{2}$, 
$\bfM\in\Sym^{d}$,
and denote the underlying semi-inner product accordingly.
By definition of the GS rule, 
we have $\mip[]{\nabla f_{\bfS}(\bfQ), \bfB}_{\Frob}^{2}
\geqslant \|\nabla f_{\bfS}(\bfQ)\|_{\Frob,\SE}^{2}/m$,  
and so   
\begin{align}\label{eq:IneqLsmoothX1}
f_{\bfS}(\bfQ)-f_{\bfS}(\tilde{\bfQ})
\geqslant \|\nabla f_{\bfS}(\bfQ)\|_{\Frob,\SE}^{2}/(2Lm). 
\end{align}  

From the $\mu$-strong convexity of $f_{\bfS}$ over $\CK_{0,\SE}$ and
the existence of $\bfQ_{\bfS,\SE}$ (Lemmas~\ref{lem:StongConvexity} and \ref{lem:GraphLikeliExist}), 
observing that $\bfQ_{\bfS,\SE}$ and $\bfQ$ are both supported by $G=([d],\SE)$,  
we have 
\begin{align*}
f_{\bfS}(\bfQ_{\bfS,\SE})
 \geqslant f_{\bfS}(\bfQ) + \mip[]{\nabla f_{\bfS}(\bfQ), \bfQ_{\bfS,\SE}-\bfQ}_{\Frob,\SE}
+\mu\|\bfQ_{\bfS,\SE}-\bfQ\|_{\Frob,\SE}^{2}/2
=\psi(\bfQ_{\bfS,\SE}).  
\end{align*}
The map $\bfM\mapsto\psi(\bfM)$, $\bfM\in\Sym^{d}$, is convex and minimal at 
$\bar{\bfM}=\bfQ-\nabla f_{\bfS}(\bfQ)/\mu$, and so  
\begin{align}\label{eq:IneqMuStrongX11}
f_{\bfS}(\bfQ_{\bfS,\SE})
\geqslant \psi(\bar{\bfM})=f_{\bfS}(\bfQ) - \|\nabla f_{\bfS}(\bfQ)\|_{\Frob,\SE}^{2}/(2\mu).  
\end{align}
Combining \eqref{eq:IneqLsmoothX1} and \eqref{eq:IneqMuStrongX11}, we get 
\begin{align}\label{eq:GapBoundImprov}
f_{\bfS}(\bfQ^{(t-1)}) - f_{\bfS}(\bfQ^{(t)}) 
\geqslant \frac{\mu}{mL}\big(f_{\bfS}(\bfQ^{(t-1)}) - f_{\bfS}(\bfQ_{\bfS,\SE})\big), \quad 
t\in\BN,  
\end{align}
and the result follows. 
\hfill\qed

\paragraph{Acknowledgements.}
 The first author thankfully acknowledges financial support from the DTP 2224 Cardiff University (UKRI grant EP/W524682/1)  at the School of Mathematics (project reference 2926088).


{\small
\bibsep0.93mm

}

\end{document}